\newtheorem{theorem}{Theorem}[section]
\newtheorem{corollary}{Corollary}[section]
\newtheorem{lemma}{Lemma}[section]
\newtheorem{definition}{Definition}
\newtheorem{remark}{Remark}[section]
\let\OldStatex\Statex
\renewcommand{\Statex}[1][3]{%
  \setlength\@tempdima{\algorithmicindent}%
  \OldStatex\hskip\dimexpr#1\@tempdima\relax}
\def\ff{\mathbb{F}}
\def\fq{\mathbb{F}_q}
\def\fqn{\mathbb{F}_q^n}
\def\fqN{\mathbb{F}_q^N}
\def\Vo{V^o}
\def\mcb{\mathcal{B}}
\def\beq{\begin{equation}}
\def\eeq{\end{equation}}
\def\yop{y_{op}}
\def\phat{\hat{\psi}}
\def\hy{\hat{y}}
\begin{document}

\title{Koopman operator approach for computing structure of solutions and Observability of non-linear finite state systems
}


\author{Ramachandran Anantharaman\\ramachandran@ee.iitb.ac.in \and Virendra Sule\\vrs@ee.iitb.ac.in \and
Department of Electrical Engineering,\\ 
Indian Institute of Technology Bombay, India.}

\maketitle

\begin{abstract}

Given a discrete dynamical system defined by a map in a vector space over a finite field called Finite State Systems (FSS), a dual linear system over the space of functions on the state space is constructed using the dual map. This system constitutes the well known Koopman linear system framework of dynamical systems, hence called the Koopman linear system (KLS). It is first shown that several structural properties of solutions of the FSS can be inferred from the solutions of the KLS. The problems of computation of structural parameters of solutions of non-linear FSS are computationally hard and hence become infeasible as the number of variables increases. In contrast, it has been well known that these problems can be solved by linear algebra for linear FSS in terms of elementary divisors of matrices and their orders. In the next step, the KLS is reduced to the smallest order (called RO-KLS) while still retaining all the information of the parameters of structure of solutions of the FSS. Hence when the order of the RO-KLS is sufficiently small, the above computational problems of non-linear FSS are practically feasible. Next, it is shown that the observability of the non-linear FSS with an output function is equivalent to that of the RO-KLS with an appropriate linear output map. Hence, the problem of non-linear observability is solved by an observer design for the equivalent RO-KLS. Such a construction should have striking applications to realistic FSS arising in Cryptology and Biological networks.    
\end{abstract}

\section*{Notations}
A finite field with $q$ elements is denoted as $\fq$ where $q = p^d$ for a prime $p$, called characteristic of the field and $d$, the degree of extension of the prime field $\ff_p$. $\fqn$ denotes the vector space of $n$-tuples over $\fq$ and $\Vo$ is the vector space of $\fq$-valued functions over $\fqn$. 

\section{Introduction}
\label{sec:intro}
Finite State Systems (FSS) are discrete time dynamical systems which evolve over a finite set $X$. Such dynamical systems arise in vast number of applications such as Stream Cipher algorithms which are used in generation of pseudo random sequences and encryption \cite{Golomb,Goresky}, in Systems Biology to model dynamics of Genetic Regulatory Networks, modeling of Biochemical reactions \cite{Kauffman,Goodwin,Kauffman2}, in Computer Science in modeling finite state machines \cite{Gill2} etc. In this paper we develop an application of an approach using the Koopman operator of these systems to solve two problems. The first problem is concerned with computing the structure of solutions of FSS while the second problem is to solve the \emph{observability} problem of non-linear FSS. These systems are considered evolving over the state space $\ff_q^n$ with a non-linear output map taking values in $\ff_q^m$. Such a dynamical system is specified by equations
\begin{equation}
\label{eq:FSS}
  \begin{aligned}
    x(k+1) &= F(x(k)) \\
    z(k) &= g(x(k)) 
    \end{aligned}
\end{equation}
where $F$ and $g$ known as state update and output maps respectively and are polynomial maps\footnote{It is well known that any map from $\fqn \to \fqn$ for any finite field $\ff_q$ is a polynomial map \cite{Lidl}} over $\fqn$. Such a system is said to be \emph{observable} if given a sequence of its outputs $z(k_0),z(k_0+1),\dots,z(k_0+L)$ for some $k_0,L$, there exists an unique initial condition $x(k_0)$ which generates the output sequence. 
\subsection{Koopman operator approach to Finite State Systems}
Koopman operator associated with the state update map $F$ is the linear operator  ($\Phi$) in the space of functions $\Vo$ over the state space $\ff_q^n$ defined as 
\[
\Phi \psi(x) = \psi(F(x))
\]
for $\psi \in \Vo$. However due to the exponential size of $\Vo$, such a map is hardly useful for computation in realistic situations unless $n$ is small. In this paper we show that a reduction of the linear system defined by the Koopman operator exists which can make this computation feasible. This resolves the problem of solving several hard problems of computation of structure of solutions of FSS whenever this reduction happens to have small enough dimension. Although the Koopman operator is known for analysis of dynamical systems in continuous time, such conditions for feasibility of computations of these hard problems in the case of FSS do not seem to have been explored earlier.

\subsection{Koopman operator approach to observability}
We show in this paper that the above reduction of the Koopman operator also resolves the problem of observability and design of an observer for non-linear FSS with output maps by practically feasible computation whenever the dimension of the reduced operator is small enough. The resulting construction of the observer for non-linear FSS using linear observer theory as reported in this paper is also believed to be unexplored in the previous literature.

\subsection{Structure of solutions of FSS}
Due to finiteness of the state space it easily follows that the solutions (or trajectories) of (\ref{eq:FSS}) are any one of the following three types only
\begin{enumerate}
    \item \emph{Fixed points}: points $x$ such that $F(x)=x$.
    \item \emph{Closed orbits}: sequences $x(k)$ (also called just orbits), $k=0,1,2,\ldots$ given $x(0)$ such that $x(k+N)=x(k)\;\forall k$. The smallest such $N>0$ is called the \emph{orbit length} (or period) of the orbit through $x(0)$. 
    \item \emph{Chains}: sequences $x(k)$ for which there is $M$ such that $x(k+N)=x(k),\ \ N \geq 0, \ \ k > M$. The chains thus settle down into an orbit or a fixed point. The minimal $M$ is called the \emph{chain length} of the chain starting at $x(0)$. The point $x(0)$ is called the \emph{root} of the chain if $x(0) = F(x)$ has no solutions $x$ in the state space.  
\end{enumerate} 
Hence, given a FSS by its map $F$ in (\ref{eq:FSS}), nature of a solution passing through any point $x$ in the state space is determined by whether the point $x$ is on a closed orbit of length $\geq 1$ or on the attractor set of such closed orbits. Problems such as computing all fixed points of the map $F$, lengths of its periodic orbits, lengths of chains and their roots are computationally hard. For instance to determine orbits of length $l$ it is necessary to find solutions of the compositional equation $F^{(l)}(x)=x$. In the specific case of the base field $\ff_2$ and $F$ a non-linear map this is a problem all finding satisfying assignments of Boolean satisfiabilty problem. These problems are in general known to be \emph{NP-Hard}. Hence understanding cases in which computation of solutions to these problems is practically feasible is an important issue in research on FSS. It has been well known since the work of \cite{Gill1} that the linear FSS (when the map $F$ is $\ff_q$-linear) all the above problems are solvable in polynomial time in $n$. Hence the linear case of FSS is considered to be practically feasible for computations. Such a linear FSS can be represented by the equation  
\begin{equation}
    \label{eq:LFSS}
    x(k+1) = Ax(k)
\end{equation}
where $x(k)$ belong to $\ff_q^n$ and $A$ is an $n\times n$ matrix over $\ff_q$. As shown in \cite{Gill1}, the problems of computation of solutions of the LFSS can be completely solved in terms of elementary divisors of the matrix $A$ and their orders. The questions we address in this paper are, how to extract information about solutions of a non-linear FSS by constructing and analyzing a linear FSS? When is such a strategy computationally feasible without resorting to an exponential in $n$ size linear systems?

\subsection{Previous work on use of linear systems to solve non-linear problems}
\cite{Dcheng} proposed an approach for Boolean networks based on semi-tensor products to convert a Boolean dynamical system to a linear map over binary field $\ff_2$. This method gives rise to a matrix of exponential size $2^n$ for an $n$-bit Boolean network which is then analyzed to study the properties of the original Boolean dynamical system. By construction, this method leads to computations which are always of exponential complexity. Also the semi-tensor product based approach is very specific to address Boolean dynamical systems while the approach proposed in this paper can be used to address non-linear dynamical systems over general finite fields. 

In this paper we propose an alternative way to compute information on solutions of non-linear FSS using the dual linear FSS determined by the Koopman operator of the map $F$ which need not always result into an exponential size linear system. For non-linear differentiable continuous time dynamical systems the Koopman operator based analysis has been known since long \cite{Koopman,Surana,JHTu,Arbabi}. A finite dimensional reduced linear representation of Koopman operator was proposed in \cite{Brunton}. The method used for constructing the reduced order Koopman operator proposed in this paper is along similar directions. However the Koopman framework does not appear to have been explored for FSS over finite fields. 
 
 \subsection{Contributions of this paper}
 In this paper we develop an approach using the Koopman operator to formulate the following problems of FSS which can be potentially useful for solving these problems in practical cases.
 \begin{enumerate}
 \item First we formulate the construction of a dual linear system evolving under the Koopman operator (called KLS) on the space of functions on the state space for a given nonlinear map $F$ of an FSS. It is then shown that the KLS contains information of all trajectories of the FSS. Moreover state trajectories of the FSS are obtained as an image of the trajectories of the KLS.
 \item Due to unsatisfactory dimension of the KLS (barring exceptions) a reduction of the KLS is obtained by defining a linear system which evolves under the restriction of the Koopman operator on a cyclic invariant subspace which contains all co-ordinate functions. This linear system has state space which is the space of values of a basis of this cyclic subspace. This is called the reduced order KLS (RO-KLS). It is then shown that the information about trajectories of the FSS such as lengths of chains, periods of closed orbits and fixed points can be obtained from the elementary divisors of the linear map of the RO-KLS. Hence whenever the dimension of this system is of practically feasible size such as $O(n^k)$ for $k\leq 3$ where $n$ is the number of state variable of the FSS, the computations of parameters of trajectories and fixed points of the FSS is practically feasible by linear algebra. This is a nontrivial observation and has no analogous correspondence in the Koopman operator theory of continuous non-linear systems.
 \item We then make an important application of the above theory to formulate and solve the problem of observer design and observability for non-linear FSS using the linear system RO-KLS and the output map of FSS. Hence this resolves the problem of observer design for non-linear FSS. Hence the construction of an observer is feasible for FSS whenever the RO-KLS has feasible size. This is an important systems theoretic result and to the best of the knowledge of authors, no such construction has been proposed for non-linear FSS in the previous literature. The observer construction has a potential for application to several applied problems in Cryptology, Biological and Biochemical networks. These applications shall be pursued elsewhere.    
 \end{enumerate}

The theory of FSS developed in this paper is applicable over general finite fields hence can also be considered for application to Boolean networks without explicit use of properties of Boolean functions.


\section{Dual of FSS: Koopman Linear System}
\label{sec:DualFSS}
The dual system of the FSS (\ref{eq:FSS}) is a linear FSS whose state space is the vector space of functions on the state space. Let $V^0$ denote the vector space of $\ff_q$ valued functions on $\fqn$. The elements of $V^0$ are called \emph{observables} of the FSS. The Koopman operator is the map $\Phi:V^0\rightarrow V^0$ defined by
\beq 
\label{eq:DualSys}
\Phi \psi(x)=\psi\circ F(x)\ \ \mbox{ for}\, \psi\in V^0
\eeq
which defines a linear dynamical system 
\beq
\label{eq:Koopman}
\psi_{k+1}(x) = \Phi \psi_k(x) = \psi_k(F(x)) \hspace{0.3in} 
\eeq
We call this FSS as the Koopman Linear System (KLS) associated with the FSS (\ref{eq:FSS}). Given an observable $\psi \in V^0$ the evolution of $\psi$ under the Koopman operator $\Phi$ generates the sequence 
\[
\psi(x),\psi(F(x)),\psi(F^2(x)),\dots, \psi(F^l(x)),\dots
\]
in $\Vo$ and given an initial point $x_0$ in $\fqn$, evaluation of functions in this sequence gives the sequence
\[
\psi(x_0),\psi(F(x_0)),\psi(F^2(x_0)),\dots, \psi(F^l(x_0)),\dots
\]
in $\ff_q$ called the sequence of \emph{evaluations} of the sequence of observables.

\subsection{Relationship of orbits of KLS with that of FSS}
As the KLS is also an FSS over the state space $\Vo$, the fixed points, orbits and chains of KLS can be defined similarly as any FSS except that fixed points are observables and orbits and chains of KLS are orbit and chains of observables. We show how trajectories of KLS reveal properties of trajectories of FSS. 
\newline

\begin{lemma} 
If the FSS has an orbit of length $l$, then the KLS also has an orbit of length $l$.
\label{lem:OrbitsKLS}
\end{lemma}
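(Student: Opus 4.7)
The plan is to exhibit an explicit observable $\psi \in V^o$ whose Koopman orbit has minimal period exactly $l$.

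I would start by naming the cycle: set $C = \{x_0, F(x_0), \ldots, F^{l-1}(x_0)\}$, whose points are distinct by the minimality of $l$, and let $B$ be its basin, i.e., $B = \{x \in \fqn : F^m(x) \in C \text{ for some } m \geq 0\}$. The key preparatory step is to assign each $x \in B$ a consistent \emph{phase} $i(x) \in \zz/l\zz$: if $F^m(x) = F^j(x_0)$ for some $m$, set $i(x) \equiv j - m \pmod{l}$. I would verify that this is independent of the choice of $m$ (since on the cycle $F$ acts as a cyclic permutation of period $l$, so $F^j(x_0) = F^{j'}(x_0)$ iff $j \equiv j' \pmod{l}$), and that it satisfies the equivariance $i(F(x)) \equiv i(x) + 1 \pmod{l}$. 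I would also note that $B$ is both forward- and backward-invariant under $F$.

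Next, I would choose a sequence $a_0, \ldots, a_{l-1} \in \ff_q$ whose minimal cyclic period equals $l$ (for example, $a_0 = 1$ and $a_1 = \cdots = a_{l-1} = 0$, which works for any $l \geq 1$ and any $q \geq 2$), and define
\[
\psi(x) = \begin{cases} a_{i(x)}, & x \in B, \\ 0, & x \notin B. \end{cases}
\]
Using the equivariance of $i$ and the backward-invariance of $B$, a short induction gives $\Phi^k \psi(x) = a_{(i(x)+k) \bmod l}$ on $B$ and $0$ off $B$. In particular $\Phi^l \psi = \psi$, while evaluating at $x_0$ (where $i(x_0) = 0$) yields $\Phi^k \psi(x_0) = a_k \neq a_0 = \psi(x_0)$ for every $0 < k < l$. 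Hence $\psi$ generates a Koopman orbit of length exactly $l$.

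The main delicate point I expect is simply the well-definedness and $F$-equivariance of the phase $i$; everything after that is a direct evaluation. A naive alternative, taking $\psi = \chi_C$, fails because $\Phi^k \chi_C = \chi_{F^{-k}(C)}$ eventually stabilises to $\chi_{C \cup B}$, a \textbf{fixed} point of $\Phi$, collapsing the period to $1$. The phase construction on the full basin is precisely what breaks that degeneracy and preserves the period $l$.
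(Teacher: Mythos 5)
Your proof is correct and takes essentially the same route as the paper: with your suggested choice $a_0=1$, $a_1=\cdots=a_{l-1}=0$, your observable $\psi$ is exactly the paper's indicator of the set $S_l=\{x\in\fqn : F^{kl}(x)=x_0 \mbox{ for some } k\}$ (the phase-$0$ slice of the basin), and your verification --- $\Phi^l\psi=\psi$ together with evaluation at $x_0$ to exclude every period $0<m<l$ --- mirrors the paper's argument step for step. The phase-function packaging and the observation about why the naive $\chi_C$ degenerates to a fixed point are useful clarifications but do not constitute a different method.
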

\begin{proof}
 Recall that KLS is defined by action of $\Phi$ on space of functions by $\Phi \psi(x) = \psi(F(x))$. Let $x_0$ be a point in $\fqn$ which lies on a closed orbit of length $l$ under the action of $F$. Let $S_l$ be the set defined as follows:
\[
S_l = \{ x \in \fqn \ \ |\  F^{kl}(x) = x_0, k \in \mathbb{Z}_+\}
\]
This set includes $x_0$ and all other points on the chains terminating in the orbit containing $x_0$ and pass through $x_0$ under the action of $F^{kl}$. 

Note that the set $S_l$ is invariant under action of $F^l$. Since for any $x \in S_l$, $\exists$ $k_0$ such that $F^{k_0l}(x) = x_0$. Hence $F^{k_0l}(F^l(x)) = F^{(k_0+1)l}(x) \in S_l$. 

Similarly if $x \in S_l^c$, the complement of $S_l$, then $F^{kl}(x) \neq x_0$ for any $k$. Hence $F^{kl}$ acting on $F^l(x)$ gives 
\begin{align*}
    F^{kl}(F^l(x)) &= F^{(k+1)l} (x) \neq x_0
\end{align*}
Hence $F^l(x) \notin S_l$ whenever $x \notin S_l$ which shows $S_l^c$ is also $F^l$ invariant. 


Construct a function $\psi \in \Vo$ defined as follows. 
\[
\psi (x) = \left\{ \begin{matrix} 1 & \forall\ x \in S_l \\ 0 & \forall \ x \in S_l^c\end{matrix} \right.
\]
We claim that $\psi$ is one such function which has a closed orbit of length $l$ under action of $\Phi$. To prove this, we first prove that $\psi$ lies on a periodic orbit and secondly we prove that $l$ is the period of $\psi$.  Now,
\begin{align*}
    \Phi^{l}\psi(x) &= \psi(F^l(x))  \\
    &= \left\{ \begin{matrix} 1 & \mbox{for} & x \in S_l \\ 0 & \mbox{for} & x \in S_l^c\end{matrix} \right. \\
    &= \psi(x)
\end{align*}
The second equality, comes due to the fact that $S_l$ and $S_l^c$ are invariant sets under $F^l$. This proves that $\psi(x)$ has a periodic orbit under $\Phi$ and whose orbit length divides $l$.

Let $0 < m < l$ be the orbit length of $\psi$. We prove that this leads to a contradiction.  Since $\Phi^m \psi(x) = \psi(x)$, in particular when $x = x_0$,
\begin{align}
    \Phi^m \psi(x_0) &= \psi(x_0) \nonumber \\
    \implies \psi(F^m(x_0)) &= \psi(x_0) \label{eq:dummy1}
\end{align}
But by definition $\psi(x) = 1$ only if $x \in S_l$. So, in (\ref{eq:dummy1}), RHS equals 1. To get the contradiction, we prove that $F^m(x_0) \notin S_l$. 

Suppose let $\alpha := F^m(x_0) \in S_l$, then there exists some $k_0$ such that
\[
F^{k_0l}(\alpha) = x_0
\]
but $\alpha = F^m(x_0)$. Substituting back,
\begin{align*}
\begin{array}{rrl}
    &F^{k_0l}(F^m (x_0)) &= x_0 \\
    \implies& F^{k_0l+m}(x_0) &= x_0 \\
    \implies&F^{m}(F^{k_0l}(x_0)) &= x_0 \\
    \implies& F^m(x_0) &= x_0
\end{array}
\end{align*} 
The last equation is due to the fact that $x_0$ is on an closed orbit of length $l$ and leads to a contradiction since $l$ is the least integer such that $f^l(x_0) = x_0$ and $m < l$ by assumptions. Hence $F^m(x_0) \notin S_l$ and in (\ref{eq:dummy1}), LHS = 0. This is a contradiction. Hence $\psi$ can not have an orbit of length $m < l$. Hence the orbit length of $\psi$ under $\Phi$ is precisely $l$.

This constructs a specific $\psi \in \Vo$ which has an orbit length $l$ under KLS whenever the FSS has an orbit of length $l$.  
\end{proof}
\begin{definition}
A FSS is called \emph{non-singular} if the map $F$ is bijective in $\fqn$. 
\end{definition}
All trajectories of non-singular FSS are either fixed points or periodic orbits. 
\newline 
\begin{lemma}
\label{lem:KLS-FixedPt}
If the FSS is non-singular, then the KLS is also non-singular. 
\end{lemma}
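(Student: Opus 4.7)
The plan is to exploit the fact that $\Vo$ is a finite-dimensional vector space over $\fq$ (its dimension is $q^n$), so the linear operator $\Phi$ is bijective if and only if it is injective, if and only if it is surjective. I would prove injectivity directly from the definition, then note that an explicit inverse is also easy to write down.

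First, suppose $F$ is bijective and let $\psi \in \Vo$ satisfy $\Phi \psi = 0$, i.e., $\psi(F(x)) = 0$ for every $x \in \fqn$. Because $F$ is surjective on $\fqn$, the image $\{F(x) : x \in \fqn\}$ is all of $\fqn$, so $\psi(y) = 0$ for every $y \in \fqn$, which forces $\psi = 0$ in $\Vo$. Therefore $\ker \Phi = \{0\}$, and by the rank--nullity theorem on the finite-dimensional space $\Vo$, $\Phi$ is a linear bijection of $\Vo$.

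For a more constructive viewpoint, I would point out that one may exhibit the inverse explicitly: define $\Psi : \Vo \to \Vo$ by $\Psi \psi = \psi \circ F^{-1}$. A direct check gives $(\Phi \Psi \psi)(x) = (\Psi \psi)(F(x)) = \psi(F^{-1}(F(x))) = \psi(x)$, and symmetrically $\Psi \Phi = I$, so $\Phi^{-1} = \Psi$. In particular, $\Phi$ itself is the Koopman operator associated with the map $F^{-1}$ acting on $\Vo$ from the other side, which makes the non-singularity of the KLS transparent.

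The only subtlety worth remarking on is that non-singularity of the FSS is defined as bijectivity of $F$ on the \emph{finite} set $\fqn$, so injectivity and surjectivity are equivalent for $F$; this equivalence is what makes the one-line argument in the previous paragraph work cleanly, and no further step should be the bottleneck.
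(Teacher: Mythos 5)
Your proof is correct and follows essentially the same route as the paper's: both arguments reduce non-singularity of $\Phi$ to injectivity, which follows because $\psi\circ F=0$ together with surjectivity of $F$ on the finite set $\fqn$ forces $\psi=0$. Your explicit inverse $\psi\mapsto\psi\circ F^{-1}$ and the explicit appeal to rank--nullity are pleasant additions but do not change the substance of the argument.
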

This means that when the FSS is non-singular, the linear map $\Phi$ is bijective over $\Vo$.

\begin{proof}
Given that the FSS is non-singular, we have for $x_1,\ x_2 \in V$, 
\[
F(x_1) \neq F(x_2) \ \ \ \forall \ x_1 \neq x_2
\]
Let $\psi_1,\ \psi_2 \in \Vo$ and $\psi_1 \neq \psi_2$. Let $\psi_d = \psi_1 - \psi_2$. We will prove that ``given $F$ is non-singular and if $\Phi\psi_1(x) = \Phi\psi_2(x) \ \forall \ x \in V$, then $\psi_d(x) = 0 \ \forall \ x \in V$" which proves that if FSS is non-singular, then Koopman operator is non-singular. 
\begin{align*}
\begin{array}{rrcl} & \Phi\psi_1(x)  &= &\Phi \psi_2(x) \ \ \forall\ x \in V \\
\implies& \Phi(\psi_1 - \psi_2)(x) &=& 0 \ \ \forall\ x \in V \\ 
\implies& \Phi \psi_d(x) &=& 0 \ \ \forall\ x \in V \\
\implies& \psi_d(F(x)) &=& 0 \ \ \forall\ x \in V  \end{array}
\end{align*}
This implies that $\psi_d(F(x)) = 0$ for all $x \in V$. But since $F$ is non-singular, the image of $F$ is the whole of $V$. So we have 
\[
\psi_d(y) = 0 \ \ \forall \ y \in V
\]
This means that $\psi_d$ is the \emph{zero} function which leads to a contradiction because $\psi_d = \psi_1 - \psi_2$ which is non-zero. 
\end{proof}

The following lemma gives the relationship between the chain (and their lengths) of the FSS with the chains of the KLS. 
\newline
\begin{lemma}
\label{lem:ChainsKLS}
If the FSS has a chain of length $l$, then the KLS also has a chain of length $l$.
\end{lemma}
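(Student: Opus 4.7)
The plan is to mirror Lemma~\ref{lem:OrbitsKLS} by producing an explicit indicator observable in $V^0$ whose Koopman trajectory has the required chain length.

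Let $x_0 \in \fqn$ be a starting point of an $F$-chain of length $l$, and write $y_j := F^j(x_0)$. By the definition of chain length, $y_0, y_1, \ldots, y_l$ are transient while $y_{l+1}, y_{l+2}, \ldots$ lie on a closed orbit; in particular $y_l$ itself is not on any $F$-cycle. I propose the observable $\psi := \delta_{y_l} \in V^0$, the indicator of $\{y_l\}$, whose Koopman iterates are $\Phi^k\psi = \mathbf{1}_{F^{-k}(y_l)}$. Two properties of this trajectory need to be established. First, $\Phi^l\psi \neq 0$, since $(\Phi^l\psi)(x_0) = \delta_{y_l}(F^l(x_0)) = \delta_{y_l}(y_l) = 1$. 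Second, $\Phi^K\psi = 0$ for some finite $K$: since $y_l$ is non-periodic, no periodic point of $F$ can lie in any $F^{-k}(y_l)$ (otherwise iterating $F$ on that periodic point would place $y_l$ on a cycle), so the preimage tree of $y_l$ lies entirely in the finite set of transient points, contains no $F$-cycle, and therefore has finite depth.

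Combining these, the trajectory $\psi, \Phi\psi, \ldots, \Phi^{K-1}\psi, 0, 0, \ldots$ settles into the fixed observable $0 \in V^0$, giving $\psi$ chain length $K-1$ in the KLS with $K-1 \geq l$. To pin the chain length down to exactly $l$, I would invoke the elementary observation that in any linear FSS the chain length of $\Phi^j v$ equals $\max(\text{chain length of } v - j,\, 0)$, since the trajectory of $\Phi^j v$ is merely the trajectory of $v$ shifted forward by $j$. Applying this to $\psi$, the observable $\Phi^{(K-1)-l}\psi \in V^0$ has chain length exactly $l$ in the KLS.

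The main obstacle I anticipate is that side branches in the preimage tree of $y_l$ can make the chain length of $\delta_{y_l}$ strictly larger than $l$, so the construction does not give exactly $l$ on the nose. The non-periodicity of $y_l$ is what controls the size of the tree, ensuring the trajectory terminates at $0$ in finitely many steps rather than wandering into a nontrivial KLS orbit, and the final $\Phi$-shift then calibrates the length down to $l$ exactly.
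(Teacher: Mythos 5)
Your proof is correct. The paper's own proof of this lemma is deferred to an appendix, but its visible template is the proof of Lemma~\ref{lem:OrbitsKLS}: build an indicator observable of a set engineered so that its $\Phi$-trajectory has the required length on the nose. You take a genuinely lighter route: use the singleton indicator $\psi=\delta_{y_l}$ of the last transient point $y_l=F^l(x_0)$, accept that side branches of the preimage tree of $y_l$ may make the KLS chain length $K-1$ of $\psi$ strictly larger than $l$, and then calibrate with the shift observation that the chain length of $\Phi^j v$ is $\max(c-j,0)$ when $v$ has chain length $c$. Both nontrivial steps check out: $(\Phi^l\psi)(x_0)=1$ gives $K-1\ge l$ (so the shift amount $(K-1)-l$ is nonnegative and the calibration is legitimate), and the key termination argument is sound --- a periodic point in $F^{-k}(y_l)$ would force $y_l$ onto a cycle, so every preimage set lies in the finite set of non-periodic points, whence $F^{-k}(y_l)=\emptyset$ for all large $k$ and the trajectory of $\psi$ lands on the zero observable, a fixed point of $\Phi$, which is an admissible terminus under the paper's definition of a chain. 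The shift lemma you invoke is elementary and holds for any FSS, not just linear ones. What your approach buys is minimal set-theoretic bookkeeping (no saturated set analogous to $S_l$ needs to be constructed or shown invariant); what it costs is that the length-$l$ chain is exhibited only after the final $\Phi$-shift rather than directly, and the identity of the starting observable $\Phi^{(K-1)-l}\delta_{y_l}$ is less explicit.
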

The proof is given in appendix. The above three lemmas together can be written as a theorem characterizing all the solutions of the FSS through the KLS. 
\newline 
\begin{theorem}
Given a FSS and the corresponding KLS
\begin{enumerate}
    \item If there exists an orbit of length $N$ in the FSS, then there exists an orbit of length $N$ in the KLS.
    \item If there exists a chain of length $l$ in the FSS, then there exists a chain of length $l$ in the KLS.
\end{enumerate}
\end{theorem}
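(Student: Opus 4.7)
The plan is to observe that the theorem is essentially a direct repackaging of the two preceding lemmas, so its proof reduces to a bookkeeping invocation. Part (1) is the exact content of Lemma~\ref{lem:OrbitsKLS}, already established by the explicit indicator construction $\psi = \mathbf{1}_{S_l}$ on the set of points in $\fqn$ that are mapped to the orbit point $x_0$ by some iterate of $F^l$. I would simply cite that lemma to handle orbits, observing additionally that fixed points of the FSS correspond to the case $N = 1$ and so are subsumed in this part.

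Part (2) is the exact content of Lemma~\ref{lem:ChainsKLS} (whose proof is deferred to the appendix), and I would cite it directly. Lemma~\ref{lem:KLS-FixedPt} on non-singularity is not explicitly invoked by either clause, but it is a consistent companion: if the FSS is non-singular then it has no chains, and since the KLS is then also non-singular it has no chains either, so part (2) holds vacuously. Thus the two clauses together give a complete transfer of orbit-length and chain-length data from the nonlinear FSS to its linear Koopman dual.

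Since no new construction is required for the theorem itself, there is no real obstacle at this step; the substance lies in the lemmas. If I were proving things from scratch, the harder of the two would be the chain statement. The orbit construction exploited the fact that $S_l$ and its complement are both $F^l$-invariant, which made the indicator function $\mathbf{1}_{S_l}$ a natural candidate. For a chain of length $l$, one would instead need an observable whose Koopman trajectory genuinely takes $l$ steps to settle into its terminal periodic portion, built from the forward filtration of the chain rather than from a single invariant partition; minimality of the chain length would then have to be argued by a contradiction along the lines of the orbit proof, evaluating $\Phi^m \psi$ at the chain's root and using the definition of chain length to force a disagreement.
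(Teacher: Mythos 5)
Your proposal matches the paper exactly: the theorem is stated there as a direct consolidation of Lemma \ref{lem:OrbitsKLS} and Lemma \ref{lem:ChainsKLS}, with no new argument supplied beyond citing those lemmas (the chain lemma's proof being deferred to the appendix, as you note). Your additional remarks on the vacuous non-singular case and on how the chain construction would go are sensible but not needed for the theorem itself.
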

The above theorem establishes that by constructing the KLS from an FSS, one does not miss any structure of solutions of FSS (i.e. lengths of orbits and chains of  solution of FSS are present in the set of lengths of solutions of the KLS). However KLS may have several extra solutions which do not correspond to any solutions of the FSS. This is because of the fact the KLS evolves over a bigger state space than the FSS. But the extra solutions of the KLS are not completely unrelated to the FSS. The following theory characterizes properties of all solutions of the KLS and its correspondence with the properties of solutions of FSS.
\newline 
\begin{definition}
Given a non-singular FSS, a number $\nu$ is called as the \emph{period of the FSS} if $\nu$ is the smallest positive integer such that 
\[
F^{\nu}(x) = x \ \ \ \forall\ \ x \in \ \fqn
\]
\end{definition}
Clearly, due to finiteness of the number of orbits, the period of a non-singular FSS is the least common multiple of all the possible orbit lengths of the FSS. Next result relates period of FSS with that of the KLS
\newline 
\begin{lemma} \label{lem:FtoKOrbits}
Given a non-singular FSS, the period of FSS is equal to the period of the associated KLS.
\end{lemma}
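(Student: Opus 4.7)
The plan is to prove equality by showing divisibility in both directions, using what is already established about orbits.

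First I would establish $\mu \mid \nu$, where $\mu$ denotes the period of the KLS (the smallest positive integer with $\Phi^\mu \psi = \psi$ for every $\psi \in V^o$; this exists and equals the lcm of orbit lengths of the KLS, since by Lemma \ref{lem:KLS-FixedPt} the KLS is itself a non-singular FSS on $V^o$). This direction is the easy one: for any $\psi \in V^o$ and any $x \in \fqn$,
\[
\Phi^\nu \psi(x) = \psi\bigl(F^\nu(x)\bigr) = \psi(x),
\]
using the definition of $\Phi$ together with $F^\nu(x) = x$ for all $x$. Hence $\Phi^\nu = I$ on $V^o$, so the KLS period $\mu$ divides $\nu$.

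For the reverse direction $\nu \mid \mu$, I would use Lemma \ref{lem:OrbitsKLS}. The period $\nu$ of the FSS equals the lcm of the lengths of all closed orbits of $F$ (since by non-singularity every state lies on some closed orbit). Given any such orbit length $l$ in the FSS, Lemma \ref{lem:OrbitsKLS} produces an observable $\psi \in V^o$ lying on an orbit of length exactly $l$ under $\Phi$. Since $\Phi^\mu \psi = \psi$ for every observable (in particular this one), and $l$ is the minimal positive integer with $\Phi^l \psi = \psi$, we conclude $l \mid \mu$. Taking the lcm over all orbit lengths of the FSS yields $\nu \mid \mu$.

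Combining $\mu \mid \nu$ and $\nu \mid \mu$ gives $\mu = \nu$, as required. I do not expect any serious obstacle: the first direction is a one-line substitution, and the second direction is a direct consequence of the already-proved Lemma \ref{lem:OrbitsKLS} together with the characterisation of the KLS period as the lcm of its orbit lengths. The only minor point to check carefully is that the KLS period is well-defined and equals such an lcm, which follows because Lemma \ref{lem:KLS-FixedPt} guarantees the KLS is non-singular, so every observable lies on a closed orbit under $\Phi$.
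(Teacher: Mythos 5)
Your proof is correct, and one of its two halves diverges from the paper's. The direction ``period of KLS divides period of FSS'' is identical in both: $\Phi^\nu\psi(x)=\psi(F^\nu(x))=\psi(x)$, so $\Phi^\nu=I$. For the reverse direction the paper argues directly: if $\Phi^{n_2}=I$ then $\psi(F^{n_2}(x))=\psi(x)$ for \emph{every} observable $\psi$, and since functions on a finite set separate points this forces $F^{n_2}$ to be the identity map, whence the FSS period divides $n_2$. You instead route the argument through Lemma \ref{lem:OrbitsKLS}: each orbit length $l$ of the FSS is realized as the exact orbit length of some observable under $\Phi$, so $l$ divides the KLS period, and taking the least common multiple over all $l$ (which equals the FSS period by non-singularity) gives the divisibility. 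Both are sound. The paper's point-separation step is shorter and self-contained, and it yields the slightly stronger intermediate fact that $F^{n_2}=\mathrm{id}$; your version buys economy of ideas by reusing the already-proved orbit-transfer lemma, at the cost of needing the auxiliary observations that the KLS period is well-defined and that both periods are lcm's of orbit lengths (which you correctly justify via Lemma \ref{lem:KLS-FixedPt} and finiteness). No gap in either direction.
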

\begin{proof}
 Let the FSS have a period $n_1$ and the KLS have a period $n_2$. We prove that $n_1|n_2$\footnote{$n_1|n_2$ means $n_2$ is an integer multiple of $n_1$} and $n_2|n_1$ and thereby prove that $n_1 = n_2$. Since the period of FSS is $n_1$,
 \[
 x(k+n_1) = F^{n_1}x(k) = x(k) \ \ \forall \ \ x(k) \ \in \ \fqn
 \]
 For some $\psi_0 \in \Vo$, 
 \begin{align}
 &\Phi^{n_1} \psi_0(x) = \psi_0 (F^{n_1}(x)) = \psi_0(x) \nonumber  \\
 &\implies n_1 = ln_2 \ \mbox{for some} \ l\ \in\ \mathbb{Z}_+ \label{eq:temp1}
 \end{align}
 Similarly, since the period of KLS is $n_2$,
\[
\Phi^{n_2} \psi_0(x) = \psi_0(x)\ \ \forall\ \ \psi\ \in\ \Vo
\]
From the definition of KLS, we for any $\psi_0$, we have
\begin{align*}
\psi_0(x) = \Phi^{n_2}\psi_0(x) = \psi_0 (F^{n_2}(x))
\end{align*}
which implies $F^{n_2}$ is an identity map over $\fqn$. 
\begin{equation}
\label{eq:temp2}
\implies\ n_2 = rn_1\ \mbox{for some}\ r\ \in\ \mathbb{Z}_+
\end{equation}
From (\ref{eq:temp1}) and (\ref{eq:temp2}), $n_1 = n_2$
\end{proof}
As a finer observation from above lemma, the following relation between FSS and KLS in terms of divisors of periods of trajectories is obtained.
\newline
\begin{corollary}
The set of prime divisors of orbit lengths of a non-singular FSS is equal to the set of prime divisors of the orbit length of the associated KLS. 
\end{corollary}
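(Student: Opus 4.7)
The plan is to derive this corollary as an almost immediate consequence of Lemma \ref{lem:FtoKOrbits} together with the elementary number-theoretic fact that the set of prime divisors of a least common multiple equals the union of the sets of prime divisors of its arguments. So the proof has essentially two ingredients: (i) express the period of a (non-singular) FSS as the lcm of its orbit lengths, and similarly for the KLS; (ii) invoke Lemma \ref{lem:FtoKOrbits} to equate the two periods, and then read off the prime divisor sets.

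In more detail, I would first note that for a non-singular FSS the state space $\fqn$ decomposes into finitely many disjoint closed orbits under $F$, of lengths $l_1,\ldots,l_r$ say. The period $\nu_F$ of the FSS is the smallest $\nu$ with $F^{\nu}(x)=x$ for all $x$, which is precisely $\mathrm{lcm}(l_1,\ldots,l_r)$. By Lemma \ref{lem:KLS-FixedPt} the associated KLS is also non-singular, so the same decomposition argument applies on $\Vo$: writing the orbit lengths of $\Phi$ as $m_1,\ldots,m_s$, the KLS period is $\nu_K=\mathrm{lcm}(m_1,\ldots,m_s)$.

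Next, since for any positive integers the set of prime divisors of $\mathrm{lcm}(a_1,\ldots,a_k)$ equals $\bigcup_i \{p \text{ prime}: p\mid a_i\}$, we have
\[
\bigcup_{i=1}^{r}\{p \text{ prime}: p\mid l_i\}=\{p \text{ prime}: p\mid \nu_F\}, \quad \bigcup_{j=1}^{s}\{p \text{ prime}: p\mid m_j\}=\{p \text{ prime}: p\mid \nu_K\}.
\]
Lemma \ref{lem:FtoKOrbits} gives $\nu_F=\nu_K$, so the two right-hand sides coincide, hence the two left-hand sides coincide, which is exactly the corollary's claim.

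There is no real obstacle here; the statement is essentially a restatement of Lemma \ref{lem:FtoKOrbits} at the level of prime divisor sets. The only mild subtlety worth mentioning in the write-up is the (standard) fact that the period, defined as the smallest $\nu$ making $F^\nu$ the identity, coincides with the lcm of the individual orbit lengths---this needs the non-singularity hypothesis to ensure that every state lies on some closed orbit, which is exactly the hypothesis of the corollary.
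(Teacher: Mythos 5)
Your proof is correct and follows essentially the same route the paper intends: the paper gives no explicit proof of this corollary, but it states it as ``a finer observation from'' Lemma~\ref{lem:FtoKOrbits} and has already noted that the period of a non-singular FSS is the lcm of its orbit lengths, which together with the prime-divisors-of-an-lcm fact is exactly your argument. Your explicit appeal to Lemma~\ref{lem:KLS-FixedPt} to justify that the KLS is also non-singular (so that its period is likewise the lcm of its orbit lengths) is a worthwhile detail to spell out.
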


The results discussed till now related the structure of solutions of the FSS (\ref{eq:FSS}) defined by the map $F$ with that of the dual linear system KLS (\ref{eq:Koopman}). These show that the orbit lengths (and chain lengths) of orbits of the FSS are a subset of the orbit lengths (and chain lengths) of the KLS. Importance of these results lies in locating possible compositional powers $k$ of the map iterates $F^{(k)}$ in solving for the points on orbits. An orbit of length $l$ exists for the FSS iff $F^{(l)}(x)=x$ has a solution. Since such problems are computationally hard, knowing for which possible $l$ a solution can be expected is of great importance. Above results facilitate knowing such possible $l$ by predicting them from the computation of orbit lengths of $\Phi$. 

However the above methodology of inferring the structure of solutions of FSS in terms of the linear system KLS is still unsatisfactory from computational point of view. This is because the KLS evolves over the state space $V^0$ which grows exponentially in size with respect to $n$, the dimension of the FSS. Hence an important problem is to find a $\Phi$-invariant subspace of $V^0$ which is smaller in dimension than $V^0$ and define a linear dynamical system on this subspace which will capture information on the structure of trajectories of FSS. Such a linear system is constructed next.

\section{Linear system which is reduced compared to KLS}
\label{sec:ROKLS}
In this section it is shown that a linear system of a minimal possible dimension can be constructed which has the complete information of the structure of all trajectories of the original FSS and is reduced in dimension relative to the KLS. This construction has the advantage that when the reduced system is of a much smaller size of $O(n^k)$ ($k\leq 3$) for an $n$-variable FSS (\ref{eq:FSS}) the computations are essentially equivalent to linear algebra computations. In such cases, the linear system required to represent information of the structure of trajectories of the FSS is no more of exponential size $O(q^n)$, which is the case in KLS.

Consider a given FSS (\ref{eq:FSS}) evolving over the state space $V$. Let the co-ordinate functions in the linear space $V^{0}$ of functions on $V$, be denoted by $\chi_{i}(.)$ i.e. for any $x$ in $V$ $\chi_{i}(x)=x_{i}$ is the $i$-th co-ordinate of $x$. Consider the $\Phi$-invariant subspace $W_{1}$ in $V^{0}$ to be the smallest cyclically generated subspace from the co-ordinate functions by action of $\Phi$. This space is the span of all iterates $\Phi^k(\chi_i)$ for $i=1,2,\ldots,n$ and $k=0,1,2,\ldots$.

Thus $W_{1}$ is the smallest $\Phi$-invariant subspace of $V^0$ which contains all co-ordinate functions $\chi_i$ for $i=1,\dots,n$. Let 
\[
N=\dim W_1
\]
Then there is a direct summand $W_{2}$ of $W_{1}$ in $V^0$ giving
\[
V^{0}=W_{1}\oplus W_{2}
\]
In fact it is well known that we can find a direct summand $W_{2}$ which is also $\Phi$-invariant but this is not necessary for our purpose here \cite{Hoffman,Strang}. Let 
\[
\Phi_{1}=\Phi|W_{1}
\]
be the restriction of $\Phi$ on $W_{1}$. Consider a basis of $W_{1}$ denoted by the ordered set
\[
\mathcal{B}=\{\psi_1,\psi_2,\ldots,\psi_N\}
\]
Since $\Phi_{1}(\psi_i)$ belongs to $W_1$ for every function $\psi_i$ in $\mathcal{B}$, for each $i$ there exist unique constants $k_{ji}$ in $\ff_q$ such that
\[
\Phi_{1}\psi_i=\sum_{j=1}^{N}k_{ji}\psi_j
\]
The matrix $K_1=[k_{ij}]$, $i,j=1,2,\ldots,N$ is thus the matrix representation of $\Phi_1$ in the basis $\mathcal{B}$. Similarly there is a unique linear representation of the co-ordinate functions $\chi_i$ (co-ordinate projections) in the basis $\mathcal{B}$. Denoting the $n$-tuple $[\chi_1,\ldots,\chi_n]^{T}$ by $\hat{\chi}$ and the $N$-tuple $[\psi_1,\ldots,\psi_N]^{T}$ by $\hat{\psi}$ there is a unique $n\times N$ matrix $C$ such that
\beq\label{statemap}
\hat{\chi}=C\hat{\psi}
\eeq
From these two matrices we can define a state, output dynamical system in $\fqN$ by
\begin{equation}
\label{ROKLS}
\hspace{0.3in}
\begin{aligned}
y(k+1)&=K_1y(k) \\
w(k)&=Cy(k)
\end{aligned}
\hspace{0.4in} 
\end{equation}
where $y(k)$ belongs $\fqN$ while $w(k)$ belongs to $\fqn$. We shall call the state space dynamics of the above system as \emph{Reduced Order Koopman Linear System} (ROKLS). We have the following structural relationship between the above system and the FSS (\ref{eq:FSS}).
\newline
\begin{theorem}
\label{thm:ROKLS}
Corresponding to a trajectory 
\begin{equation*}
x(0),x(1),\ldots,x(r-1)
\end{equation*}
of FSS (\ref{eq:FSS}) in $\fqn$, there is a unique trajectory $y(k)$ of the kind
\begin{equation}
\label{evaluation}
y(k)=\hat{\psi}(x(k)),\ \  k=0,1,\ldots,(r-1)
\end{equation}
such that for each $k$
\begin{equation*}
x(k)=Cy(k)    
\end{equation*}
If the trajectory $x(k)$ is periodic of period $r$ (chain of length $r$) then the above trajectory $y(k)$ is also periodic of period $r$ (respectively chain of length $r$).
\end{theorem}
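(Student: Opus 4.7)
The plan is to prove the theorem by explicit construction. Define $y(k) := \hat{\psi}(x(k))$, the $N$-tuple of evaluations of the basis functions $\psi_1,\ldots,\psi_N$ at the FSS trajectory points, and then verify in turn (i) the output relation $x(k) = Cy(k)$, (ii) the state recursion $y(k+1) = K_1 y(k)$, (iii) uniqueness, and (iv) the preservation of period and chain length. The output relation is immediate: the identity $\hat{\chi} = C\hat{\psi}$ of functions on $\fqn$, evaluated at $x(k)$, gives $x(k) = \hat{\chi}(x(k)) = C\hat{\psi}(x(k)) = Cy(k)$, since each $\chi_i$ is the $i$-th coordinate projection.

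For the state recursion, I compute componentwise. For each basis element $\psi_i$,
\[
y_i(k+1) = \psi_i(x(k+1)) = \psi_i(F(x(k))) = (\Phi\psi_i)(x(k)) = (\Phi_1\psi_i)(x(k)),
\]
since $W_1$ is $\Phi$-invariant. Substituting the expansion $\Phi_1\psi_i = \sum_j k_{ji}\psi_j$ yields $y_i(k+1) = \sum_j k_{ji}\, y_j(k)$, which in view of the definition of $K_1$ as the matrix representation of $\Phi_1$ in the basis $\mathcal{B}$ is exactly the $i$-th component of $K_1 y(k)$. Hence $y(k+1) = K_1 y(k)$, so the constructed sequence is a trajectory of the ROKLS. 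Uniqueness is then a consequence of determinism of the linear system: the initial value $y(0) = \hat{\psi}(x(0))$ is fixed by the FSS initial point, and $K_1$ propagates it uniquely.

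For the periodicity statement, suppose $x$ has minimal period $r$. Then $y(k+r) = \hat{\psi}(x(k+r)) = \hat{\psi}(x(k)) = y(k)$, so the period of $y$ divides $r$; and if $y$ had a strictly smaller period $r' < r$, applying $C$ and using $x(k) = Cy(k)$ would give $x(k+r') = x(k)$, contradicting the minimality of $r$. For the chain-length statement, if $x$ has chain length $r$ then after $r$ steps $x(k)$ lies on a closed orbit, and applying $\hat{\psi}$ componentwise shows the same for $y(k)$; conversely, if $y$ settled to a cycle in fewer than $r$ steps, then pushing forward through $x = Cy$ would show $x$ settled earlier, again a contradiction. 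The main subtle point of the argument is to make sure the intertwining maps $y = \hat{\psi}\circ x$ (forward) and $x = Cy$ (backward) let periodicity/chain information transfer in both directions; the computations themselves are straightforward once one keeps careful track of the matrix-versus-evaluation conventions entering the definition of $K_1$.
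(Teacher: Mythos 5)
Your construction $y(k)=\hat{\psi}(x(k))$, the verification of $x(k)=Cy(k)$ from the functional identity $\hat{\chi}=C\hat{\psi}$, and the componentwise use of the matrix representation of $\Phi_1$ to get $y(k+1)=K_1y(k)$ are exactly the paper's argument. Your handling of minimality (pushing a hypothetical shorter period or earlier settling of $y$ back to $x$ through $C$) is in fact slightly more careful than the paper's proof, which only checks $y(r)=y(0)$ and the fixed-point case; this is a welcome refinement rather than a different route.
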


Note that the uniqueness of the trajectory $y(k)$ in the statement of the above theorem refers to the kind of trajectory as $y(k)=\hat{\psi}(x(k))$ which is the trajectory obtained as evaluation of the $N$-tuple of functions $\hat{\psi}$ at each $x(k)$. 
\begin{proof}
Due to uniqueness of the $C$ matrix in expansion of the co-ordinate functions in (\ref{statemap}), a trajectory $x(k)$ in state space has unique expression as
\begin{equation}
\label{eq:x_C_chi}
x(k)=\hat{\chi}(x(k))=C\hat{\psi}(x(k))
\end{equation}
while the trajectory $y(k)$ of evaluations in (\ref{evaluation}) satisfies the following since $K_1$ is the matrix representation of $\Phi_1$ in the basis $\mcb$
\begin{align}
\begin{array}{lcl}
K_1y(k) & = & K_1\hat{\psi}(x(k))\\
 & = & \Phi_1 \psi_i(x(k))\ \ \mbox{ ($N$-tuple of $\Phi_1\psi_i$)}\\
 & = & \hat{\psi}(F(x(k))\\
 & = & \hat{\psi}(x(k+1))
\end{array}
\label{eq:thmROKLS}
\end{align}
Hence $y(k+1)=K_1y(k)$. This proves the first part of the theorem.

Now if $x(r)=x(0)$ then
\[
y(r)=\hat{\psi}(x(r))=\hat{\psi}(x(0))=y(0)
\]
Similarly if $x(r)$ is a fixed point $x(r+k)=x(r)$ for $k>0$ hence by above expression it follows that $y(r+k)=y(r)$. This proves the last two statements. 
\end{proof}
The RO-KLS as a dynamical system evolves over $\fqN$ and the correspondence of solution trajectories are over evaluation maps and not as functions. Given a FSS with an initial condition $x(0)$, the above theorem does not guarantee a function $g(x) \in W_1$ which has dynamics similar to that of $x(0)$ but a sequence of points $\hat{\psi}(x(k))$ with an initial condition $\hat{\psi}(x(0))$ obtained through the evaluation of basis vectors $\mcb$ whose dynamics under $K_1$ has a one-to-one correspondence with the dynamics of $x(0)$ under the FSS dynamics. The advantage being computation of dynamics of $\hat{\psi}(x(0))$ is well known from the linear theory. 
\newline
\begin{corollary}
The set of lengths of all trajectories of the FSS is a subset of the set of lengths of all trajectories of the RO-KLS
\end{corollary}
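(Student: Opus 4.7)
The plan is to derive the corollary as an essentially direct consequence of Theorem \ref{thm:ROKLS}, augmented by a short minimality argument that the theorem's proof leaves implicit.

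First, I would pick an arbitrary trajectory of the FSS of length $l$. There are two cases to treat, corresponding to the two kinds of trajectories mentioned in Section \ref{sec:intro}: either a periodic orbit of period $l$, or a chain of chain-length $l$. For each case, Theorem \ref{thm:ROKLS} produces the companion trajectory
\[
y(k)=\hat{\psi}(x(k))
\]
in $\fqN$, which is a genuine solution of the RO-KLS because $y(k+1)=K_1 y(k)$ was verified in the theorem. Thus a trajectory of the FSS always lifts to a trajectory of the RO-KLS via the evaluation map.

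The real content is to show that the lifted trajectory has exactly the same length, not merely a length dividing $l$. For the periodic case, Theorem \ref{thm:ROKLS} already gives $y(l)=y(0)$, so the period of $y$ divides $l$. If this period were some $s<l$, then $y(s)=y(0)$, and applying the output equation $w=Cy$ of \eqref{ROKLS} together with the identity $x(k)=Cy(k)$ from \eqref{eq:x_C_chi} yields $x(s)=Cy(s)=Cy(0)=x(0)$, contradicting the minimality of $l$ as the period of the FSS orbit. For the chain case, the same relation $x(k)=Cy(k)$ forces any premature stabilization of $y$ into its terminal orbit to produce a corresponding premature stabilization of $x$, again contradicting minimality of the chain length $l$.

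Combining these two cases, every length $l$ occurring as a trajectory length of the FSS occurs as a trajectory length of the RO-KLS, which is exactly the subset statement of the corollary. The main subtlety, and the only step that goes beyond quoting Theorem \ref{thm:ROKLS} verbatim, is the use of the coordinate-extraction matrix $C$ to transfer equalities in $y$-space back to equalities in $x$-space; this is what rules out the possibility that $y$-trajectories could be strictly shorter than their $x$-counterparts.
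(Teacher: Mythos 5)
Your proposal is correct and follows the paper's own route: the corollary is presented as an immediate consequence of Theorem \ref{thm:ROKLS}, which already asserts that the lifted trajectory $y(k)=\hat{\psi}(x(k))$ is a trajectory of the RO-KLS with the same period or chain length. Your added minimality step --- using $x(k)=Cy(k)$ from \eqref{eq:x_C_chi} to rule out a strictly shorter period or earlier stabilization of $y$ --- is a worthwhile patch, since the paper's proof of the theorem only establishes $y(r)=y(0)$ (so a priori the length of the lifted trajectory merely divides $r$) and leaves the exactness of the length implicit.
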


These lengths can be computed from the elementary divisors of $K_1$ using the theory developed in \cite{Gill1}.

\begin{theorem}
\label{cor:fixedpt}
A point $x_0$ is a fixed point of the FSS (\ref{eq:FSS}) iff 
there exists $y_0 \in \ff^N$, which is an eigenvector of $K_1$ corresponding to an eigenvalue $1$ satisfying 
\begin{enumerate}
    \item $y_0 = \hat{\psi}(C y_0) $ 
    \item $x_0 = C y_0$
\end{enumerate}
where $K_1$ and $C$ are defined as in equation (\ref{ROKLS}).
\end{theorem}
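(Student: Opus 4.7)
The plan is to prove both directions by leveraging the two key identities already established in the excerpt: the co-ordinate expansion $\hat{\chi}(x)=C\hat{\psi}(x)$ from (\ref{eq:x_C_chi}), and the intertwining relation $K_1\hat{\psi}(x)=\hat{\psi}(F(x))$ from (\ref{eq:thmROKLS}). These two facts reduce the equivalence to a couple of lines of algebra in each direction.

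For necessity, suppose $F(x_0)=x_0$. Define $y_0 := \hat{\psi}(x_0)$. Then by (\ref{eq:thmROKLS}),
\[
K_1 y_0 = K_1\hat{\psi}(x_0) = \hat{\psi}(F(x_0)) = \hat{\psi}(x_0) = y_0,
\]
so $y_0$ is an eigenvector of $K_1$ with eigenvalue $1$ (and we must remark that $y_0\neq 0$, since $\hat{\psi}(x_0)=0$ combined with $x_0 = C\hat{\psi}(x_0)$ would force $x_0=0$, a case we handle separately by noting that if needed one can also invoke Theorem \ref{thm:ROKLS} directly for the period-$1$ trajectory). Condition 2 is immediate from (\ref{eq:x_C_chi}): $x_0 = \hat{\chi}(x_0) = C\hat{\psi}(x_0) = Cy_0$. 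Condition 1 then follows by substituting this into the definition of $y_0$: $y_0 = \hat{\psi}(x_0) = \hat{\psi}(Cy_0)$.

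For sufficiency, suppose $y_0$ is an eigenvector of $K_1$ with eigenvalue $1$ satisfying conditions 1 and 2, and set $x_0 := Cy_0$. Condition 1 says precisely $y_0 = \hat{\psi}(x_0)$, which identifies the abstract vector $y_0\in\fqN$ as a genuine evaluation tuple of the basis $\mathcal{B}$ at a point of the state space. Applying (\ref{eq:thmROKLS}) to this specific point gives
\[
\hat{\psi}(F(x_0)) = K_1\hat{\psi}(x_0) = K_1 y_0 = y_0 = \hat{\psi}(x_0).
\]
Left-multiplying by $C$ and using (\ref{eq:x_C_chi}) on both sides yields $\hat{\chi}(F(x_0)) = \hat{\chi}(x_0)$, i.e.\ $F(x_0)=x_0$.

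The only conceptual subtlety, and the main point to highlight in the write-up, is the role of condition 1. Without it, an eigenvector of $K_1$ with eigenvalue $1$ need not correspond to a fixed point of $F$: the RO-KLS lives on all of $\fqN$, whereas only the image of the evaluation map $x\mapsto\hat{\psi}(x)$ carries the dynamical meaning of the FSS. Condition 1 is exactly the algebraic witness that $y_0$ lies on this image, pinned at the candidate point $x_0=Cy_0$. Beyond this observation, the argument is a direct manipulation using the two identities already proved, so I do not anticipate any obstacle requiring new machinery.
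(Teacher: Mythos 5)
Your proposal is correct and follows essentially the same route as the paper's own proof: define $y_0=\hat{\psi}(x_0)$ and use the intertwining relation $K_1\hat{\psi}(x)=\hat{\psi}(F(x))$ together with $\hat{\chi}=C\hat{\psi}$ in both directions. Your parenthetical about ruling out $y_0=0$ (so that it is a genuine eigenvector) is a small point of care the paper's proof silently skips, but it does not change the argument.
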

This theorem identifies all the fixed points of the FSS (\ref{eq:FSS}) through the eigenvectors for eigenvalue $1$ of the ROKLS matrix $K_1$. 
\begin{proof}
Given $x_0$ to be a fixed point of system (\ref{eq:FSS}) let $y_0 = \hat{\psi}(x_0)$. By theorem (\ref{thm:ROKLS}) it is proved that if $x_0$ is on an orbit of length $r$, then $y_0 = \hat{\psi}(x_0)$ also lies on an orbit of length $r$ under ROKLS. Since $x_0$ is a fixed point, it lies on an orbit of length 1 under FSS and hence  $y_0$ is also a fixed point of ROKLS which implies that under the ROKLS dynamics,
\[
K_1 y_0 = y_0
\]
and hence $y_0$ is a eigenvector of $K_1$ for an eigenvalue 1. Since by construction $y_0 = \hat{\psi}(x_0)$ and from equation (\ref{eq:x_C_chi}), it follows that 
\[
C y_0 = C \hat{\psi}(x_0) = x_0
\]
and 
\[
y_0 = \hat{\psi}(x_0) = \hat{\psi}(C y_0)
\]
which proves the necessity condition. 

To prove the converse, consider the set $\mathcal{Y} = \{ y_1, y_2,\dots, y_l\}$ of all eigenvectors of $K_1$ for the eigenvalue $1$. Given $y \in \mathcal{Y}$ which also satisfies $y = \hat{\psi}(Cy)$, compute $w = Cy$. So, 
\[
y = \hat{\psi}(Cy) = \hat{\psi}(w)
\]
We prove that this $w \in \ff^n$ is a fixed point of the FSS using the facts $y= K_1y$, $w = Cy$ and $y = \hat{\psi}(w)$
\begin{align*}
w = Cy &= C K_1 y \\
    &= C K_1 \hat{\psi}(w) \\
    &= C \hat{\psi}(F(w)) \hspace{0.7in} \mbox{due to equation (\ref{eq:thmROKLS}) as $K_1 \hat{\psi}(w) = \hat{\psi}(F(w))$}\\
    &= \hat{\chi}(F(w)) \hspace{1in} \mbox{due to equation (\ref{eq:x_C_chi})}  \\
    & = F(w)
\end{align*}
This proves that the $w$ constructed is a fixed point of FSS (\ref{eq:FSS}). 
\end{proof}

\begin{remark}
Computation of all the fixed points of a FSS is equivalent to computing all solutions of the polynomial equation 
\[
F(x) = x
\]
which is a $NP$-class computation problem. Theorem (\ref{cor:fixedpt}) converts this problem to an equivalent problem of computation of eigenvectors of a matrix representation of the restriction of the dual operator $\Phi$. An advantage of the computation of fixed points of the FSS by the conditions of the above theorem is that, among all the eigenvectors $v$ for the eigenvalue $1$, it is only required to verify which eigenvectors $v$ satisfy $v = \hat{\psi}(Cv)$ and not solve for it explicitly. 
\end{remark}

\begin{theorem}
A point $x_0$ is on an orbit of length $L$  under the FSS (\ref{eq:FSS}) iff 
there exists $y_0 \in \ff^N$ satisfying 
\begin{enumerate}
    \item $y_0$ is on an orbit of length $L$ under the dynamics of ROKLS. 
    \item $y_0 = \hat{\psi}(C y_0) $ 
    \item $x_0 = C y_0$
\end{enumerate}
\end{theorem}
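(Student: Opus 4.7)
The plan is to prove both directions by mimicking the strategy of Theorem \ref{cor:fixedpt}, using Theorem \ref{thm:ROKLS} together with the identity $\hat{\chi}(x) = C\hat{\psi}(x) = x$ that follows from the coordinate projections lying in $W_1$. This identity lets any state of the FSS be recovered from its evaluation on the basis $\mathcal{B}$, and it is the mechanism by which minimality of periods transfers between the two systems.

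For the necessity direction, assume $x_0$ lies on an $F$-orbit of length $L$ and put $y_0 = \hat{\psi}(x_0)$. Then $Cy_0 = C\hat{\psi}(x_0) = x_0$ gives (3), and $\hat{\psi}(Cy_0) = \hat{\psi}(x_0) = y_0$ gives (2). By Theorem \ref{thm:ROKLS}, the trajectory $y(k) = \hat{\psi}(x(k)) = K_1^k y_0$ is a ROKLS trajectory, and $x(L) = x_0$ yields $K_1^L y_0 = y_0$, so the $K_1$-period of $y_0$ divides $L$. To get equality, observe that if $K_1^s y_0 = y_0$ for some $0 < s < L$, then $x(s) = Cy(s) = Cy_0 = x_0$ contradicts minimality of $L$. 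Hence $y_0$ has minimal period exactly $L$, establishing (1).

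For the converse, let $y_0$ satisfy (1)--(3) and set $x_0 = Cy_0$, so by (2), $\hat{\psi}(x_0) = y_0$. The key step is the inductive identity $\hat{\psi}(F^k(x_0)) = K_1^k y_0$ for all $k \geq 0$: the base case is condition (2), and the inductive step uses the intertwining $K_1 \hat{\psi}(\cdot) = \hat{\psi}(F(\cdot))$ from equation (\ref{eq:thmROKLS}). Applying $C$ yields $F^k(x_0) = C\hat{\psi}(F^k(x_0)) = C K_1^k y_0$. Taking $k = L$ and using $K_1^L y_0 = y_0$ gives $F^L(x_0) = Cy_0 = x_0$, so the $F$-orbit length of $x_0$ divides $L$. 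If some proper divisor $s$ satisfied $F^s(x_0) = x_0$, then $K_1^s y_0 = \hat{\psi}(F^s(x_0)) = y_0$, contradicting minimality of the $K_1$-period $L$; so the orbit length is exactly $L$.

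The main obstacle is the two-sided matching of \emph{minimal} periods, since Theorem \ref{thm:ROKLS} as stated only supplies one-sided divisibility. The resolution in both directions relies on the fact that $\chi_1, \ldots, \chi_n \in W_1$, which makes the evaluation map $x \mapsto \hat{\psi}(x)$ injective with one-sided inverse $C$; this injectivity is what forbids period collapses on either side. Beyond this, the argument is a straightforward iteration of the intertwining $K_1 \hat{\psi} = \hat{\psi} \circ F$ already exploited in Theorems \ref{thm:ROKLS} and \ref{cor:fixedpt}.
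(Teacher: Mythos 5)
Your proof is correct and follows essentially the same route as the paper's: set $y_0=\hat{\psi}(x_0)$, use the intertwining $K_1\hat{\psi}=\hat{\psi}\circ F$ together with the identity $C\hat{\psi}=\hat{\chi}$ to transfer orbits in both directions, and rule out shorter periods by the same contradiction. The only difference is that in the necessity direction you spell out the minimality argument explicitly (via $x(s)=Cy(s)=Cy_0=x_0$), whereas the paper delegates it to Theorem~\ref{thm:ROKLS}; this is a welcome clarification, since that theorem's own proof only establishes that the period of $y(k)$ divides that of $x(k)$.
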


\begin{proof}
Given a $x_0$ on an orbit of length $L$ under the FSS, let $y_0 = \hat{\psi}(x_0)$. 

By theorem (\ref{thm:ROKLS}), it is proved that $y_0$ is on an orbit of length $L$ under the ROKLS. Hence $K_1^L y_0 = y_0$.
Also, 
\[
x_0 = \hat{\chi}(x_0) = C \hat{\phi}(x_0) = Cy_0
\]
and
\[
y_0 = \hat{\psi}(x_0) = \hat{\psi}(Cy_0)
\]
which proves the necessary conditions. 

To prove sufficiency, let $y_0$ be on an orbit of length $L$ under the ROKLS satisfying $y_0 = \hat{\psi}(Cy_0)$. Let $x_0 = Cy_0$. Hence $y_0 = \hat{\psi}(x_0)$ and, 
\begin{align*}
F(x_0) &= \hat{\chi}(F(x_0)) \\    
    &= C \hat{\psi}(F(x_0)) \\
    &= C K_1 \hat{\psi}(x_0) \quad \quad \mbox{(from equation (\ref{eq:thmROKLS}))} \\
    &= C K_1 y_0
\end{align*}
Similarly, one can prove $F^{(m)} (x_0) = C K_1^m y_0$ for $m \geq 0$. Since $y_0$ is on an orbit of length $L$, $K_1^Ly_0 = y_0$ and hence  
\[
F^{(L)}(x_0) = C K_1^L y_0 = C y_0 = x_0
\]
Hence $x_0$ is on an orbit whose length divides $L$. To prove that the length is exactly $L$, assume the contrary. Let $l < L$ be the orbit length of $x_0$ under the FSS (i.e $F^{l}(x_0) = x_0$. From equation (\ref{eq:thmROKLS})  
\begin{align*}
    K_1^l y_0 &= K_1^l \hat{\psi}(x_0) \\
    &= \hat{\psi}(F^{(l)}(x_0)) \\
    &= \hat{\psi} (x_0) \\
    &= y_0
\end{align*}
which means that the orbit length of $y_0$ is also $l$ and that is a contradiction since $y_0$ is assumed to be on an orbit of length $L$. So, the orbit length of $x_0$ constructed as $x_0 = Cy_0$ is exactly $L$, the orbit length of $y_0$ whenever $y_0$ satisfies $y_0 = \hat{\psi}(Cy_0)$.
\end{proof}

\begin{theorem}
A point $x_0$ is a root of chain of length $L$ under FSS iff $\exists y_0 \in \ff^N$ satisfying 
\begin{enumerate}
    \item $y_0$ is the root of chain of length $L$ under ROKLS
    \item $y_0 = \hat{\psi}(Cy_0)$
    \item $x_0 = Cy_0$
\end{enumerate}
\end{theorem}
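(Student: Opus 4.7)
My plan is to mirror the preceding theorem on orbits of length $L$: for the forward direction I would set $y_0 = \hat{\psi}(x_0)$, and for the reverse I would set $x_0 = Cy_0$. Conditions (2) and (3) then follow immediately from the identity $C\hat{\psi}(v)=\hat{\chi}(v)=v$, which gives $Cy_0 = C\hat{\psi}(x_0) = x_0$ and $\hat{\psi}(Cy_0) = \hat{\psi}(x_0) = y_0$; in fact, conditions (2) and (3) already force $y_0 = \hat{\psi}(x_0)$, so the candidate $y_0$ is essentially unique.

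For the chain-length portion of condition (1) I would invoke Theorem \ref{thm:ROKLS}: the ROKLS trajectory from $y_0 = \hat{\psi}(x_0)$ is exactly $(\hat{\psi}(F^k(x_0)))_{k\ge 0}$, and since $\hat{\psi}$ is injective on $V$ (because $C\hat{\psi}|_V = \mathrm{id}_V$), the step at which this trajectory enters its cycle matches the step at which the FSS trajectory enters its cycle. Hence the two chain lengths agree.

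The essential new content is the ``root'' property. The reverse direction is clean: if some $w\in V$ satisfies $F(w) = x_0$, then $K_1\hat{\psi}(w) = \hat{\psi}(F(w)) = \hat{\psi}(x_0) = y_0$ exhibits a $K_1$-preimage of $y_0$, contradicting that $y_0$ is a root of ROKLS. For the forward direction, I would suppose for contradiction that $K_1 y = y_0 = \hat{\psi}(x_0)$ for some $y\in\ff^N$; applying $C$ gives $CK_1 y = x_0$, which via the identity $CK_1\hat{\psi}(w)=F(w)$ (from equations (\ref{eq:thmROKLS}) and (\ref{eq:x_C_chi})) immediately forces $F(Cy)=x_0$ whenever $y\in\hat{\psi}(V)$, contradicting that $x_0$ is a root.

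The hard part will be ruling out $K_1$-preimages $y\notin\hat{\psi}(V)$. The approach I would take is to reduce the forward direction to the linear-algebraic claim that the image of $K_1$ in $\ff^N$ meets $\hat{\psi}(V)$ exactly in $\hat{\psi}(F(V))$. Because $\hat{\psi}(V)$ spans $\ff^N$ (no nonzero element of $W_1$ vanishes on all of $V$), the image of $K_1$ is the span of $\hat{\psi}(F(V))$; so the claim reduces to showing $\hat{\psi}(x_0)\notin\mathrm{span}\,\hat{\psi}(F(V))$ whenever $x_0\notin F(V)$. Pairing a hypothetical relation $\hat{\psi}(x_0)=\sum_{v\in F(V)}c_v\hat{\psi}(v)$ with each basis element of the form $\chi_i\circ F^k\in W_1$ yields the family of identities $F^k(x_0)=\sum_v c_v F^k(v)$ for all $k\ge 0$ and all coordinates $i$, and the cyclic generation of $W_1$ from the coordinate functions $\chi_i$ is the structural ingredient one must exploit to turn this family into the contradiction $x_0\in F(V)$.
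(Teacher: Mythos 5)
Your reduction of the necessity direction is the right one: conditions (2)--(3) force $y_0=\hat{\psi}(x_0)$, and since $\hat{\psi}(V)$ spans $\ff^N$ we have $\mathrm{Im}(K_1)=\mathrm{span}\,\hat{\psi}(F(V))$, so everything hinges on the claim that $x_0\notin F(V)$ implies $\hat{\psi}(x_0)\notin\mathrm{span}\,\hat{\psi}(F(V))$. That claim is false, so the step you flag as ``the hard part'' cannot be completed. Take $q=2$, $n=3$, and let $F$ send $(0,0,1)\mapsto(1,0,0)$, $(1,0,1)\mapsto(0,1,0)$, and every other point to $(0,0,0)$; in polynomial form $F=((1+x_1)(1+x_2)x_3,\ x_1(1+x_2)x_3,\ 0)$. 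Here $\Phi^2\chi_i=0$ for all $i$, so $W_1$ is $5$-dimensional with basis $\{\chi_1,\chi_2,\chi_3,\Phi\chi_1,\Phi\chi_2\}$, $F(V)=\{(0,0,0),(1,0,0),(0,1,0)\}$, and $x_0=(1,1,0)$ is a root of a chain of length $1$. Nevertheless $\phi(1,1,0)=\phi(1,0,0)+\phi(0,1,0)+\phi(0,0,0)$ for every $\phi$ in the spanning set $\{\chi_i\circ F^k\}$ (check $k=0$ coordinatewise; for $k\geq 1$ both sides vanish), hence $\hat{\psi}(x_0)=\hat{\psi}(1,0,0)+\hat{\psi}(0,1,0)=K_1(\hat{\psi}(0,0,1)+\hat{\psi}(1,0,1))$. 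So $y_0=\hat{\psi}(x_0)$ has a $K_1$-preimage and is \emph{not} a root in the ROKLS, while $y_0$ is the only vector satisfying (2)--(3). The forward implication of the theorem therefore fails as stated, and no exploitation of the cyclic generation of $W_1$ can turn your family of identities $F^k(x_0)=\sum_v c_vF^k(v)$ into the contradiction $x_0\in F(V)$.

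For what it is worth, your instinct about where the difficulty sits is exactly right, and it is the same place the paper's own proof breaks down: the paper argues only that no point \emph{of the form} $\hat{\psi}(z)$ maps to $y_0$ under $K_1$ and then concludes that $y_0$ has no preimage at all in $\ff^N$; in the example above the preimage $\hat{\psi}(0,0,1)+\hat{\psi}(1,0,1)$ is precisely not of that form. Your reverse direction (a preimage $w$ of $x_0$ under $F$ yields the preimage $\hat{\psi}(w)$ of $y_0$) and the chain-length bookkeeping via injectivity of $\hat{\psi}$ on $V$ are both correct and agree with the paper; only the ``$x_0$ a root $\Rightarrow y_0$ a root'' step is broken, in your plan and in the paper alike.
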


\begin{proof}
To prove necessity, let $x_0$ be a root of a chain of length $L$ under FSS. So $x_L = F^{L}(x_0)$ lies on a periodic orbit of length $M \geq 1$.

Let $y_0 = \hat{\psi}(x_0)$. From theorem (\ref{thm:ROKLS}), it known that if $x_0$ is on a chain of length $L$ under FSS, then $\hat{\psi}(x_0)$ is on a chain of length $L$ under ROKLS. Also, 
\[
x_0 = \hat{\chi}(x_0) = C \hat{\psi}(x_0) = Cy_0
\]
and 
\[
y_0 = \hat{\psi}(x_0) = \hat{\psi}(Cy_0)
\]

The only thing which needs to be proved that $y_0$ is the root of the chain. 

As defined in Theorem \ref{thm:ROKLS}, any trajectory $x(k)$ of FSS in $\ff$ is embedded in the state space $\ff^N$ of ROKLS (\ref{ROKLS}) by the map $x(k)\mapsto y(k)=\hat{\psi}(x(k))$. Since $x_0$ is a root of a chain of FSS iff there is no point $z$ in $\ff^n$ on a trajectory such that $F(z)=x_0$, by the above unique embedding of trajectories of FSS into trajectories of ROKLS, there is no point $\hat{\psi}(z)$ in $\ff^N$ on the trajectory of ROKLS such that $y_0=\hat{\psi}(x_0)=K(\hat{\psi}(z))$. Hence $y_0$ is also a root of the trajectory in the state space of ROKLS.

To prove the sufficiency, let $y_0$ be a root of a chain of length $L$ and $y_0 = \hat{\psi}(Cy_0)$ and $x_0 = Cy_0$. We need to prove $x_0$ is a root of chain of length $L$. By construction of $x_0$, 
$y_0 = \hat{\psi}(Cy_0) = \hat{\psi}(x_0)$. 

From the previous theorem, if $y_0 = \hat{\psi}(x_0)$, it is proved that
\[
F^{(m)}(x_0) = C K_1^m y_0
\]
Since $y_0$ is on a chain of length $L$, $K_1^Ly_0$ is on a period orbit of say length $M$. So $K_1^{L+kM}y_0 = K_1^{L}y_0$ for all $k \in \mathbb{Z}_+$. Hence 
\[
F^{(L+kM)}(x_0) = CK_1^{L+kM}y_0 = CK_1^{L}y_0 = F^{(L)} (x_0)
\]
which means $F^{(L)}(x_0)$ is on a periodic orbit. If there exists some $l < L$ such that $F^{(l)}(x_0)$ is on a periodic orbit then 
\[
F^{(l+m_1)} (x_0) = F^{(l)}(x_0)
\]
for some $m_1 \in \mathbb{Z}_+$. Since $y_0 = \hat{\psi}(x_0)$ 
\begin{align*}
    K_1^{(l+m_1)}y_0 &= K_1^{(l+m_1)}\hat{\psi}(x_0) \\
    &= \hat{\psi}( F^{(l+m_1)}(x_0)) \\
    &= \hat{\psi}( F^{(l)} (x_0)) \\
    &= K_1^l y_0
\end{align*}
which proves that $K_1^l y_0$ lies on a periodic orbit and the length of chain starting from $y_0$ under ROKLS is $l$ which is a contradiction. So the length of the chain starting from $x_0$ under FSS is also $L$. 

The last thing to prove is that $x_0$ is the root of the chain. Assume the contrary again. Let there be $x \in \ff^n$ such that $F(x) = x_0$. Construct $y = \hat{\psi}(x)$. 
\begin{align*}
    K_1 y &= K_1 \hat{\psi}(x) \\
        &= \hat{\psi}(F(x)) \\
        &= \hat{\psi}(x_0) \\
        &=  y_0
\end{align*}
which proves that $y_0$ is also not a root which is a contradiction. Hence $x_0$ is the root of the chain of length $L$ under FSS.

\end{proof}

\subsection{Algorithm for computation of ROKLS}

Given an initial condition $x(0)$ of the FSS, the RO-KLS gives an explicit computational approach to compute the length of the orbit (or chain) starting from $x(0)$. This is achieved by setting $y(0) = \hat{\psi}(x(0))$ and then computing the solution of $y(0)$ under the dynamics of RO-KLS and use the equivalence proved in theorem \ref{thm:ROKLS}. Construction of the cyclic $\Phi$-invariant subspace $W_1$ is shown in Algorithm \ref{alg:KLSCyc}. Once the cyclic invariant subspace $W_1$ and its basis $\mcb$ is computed, the matrix representation of $K_1$ follows easily.

\begin{algorithm}[h]
\begin{algorithmic}[1]
\caption{Construction of $W_1$ - Cyclic invariant subspaces spanning $\chi_i(x)$}
\label{alg:KLSCyc}
\Procedure{Cyclic Invariant subspace}{$W_1$}
\State \textbf{Outputs}: 
\begin{itemize}
    \item[] $W_1$ - the cyclic subspace which span the coordinate functions and $\Phi$-invariant 
    \item[] $\mathcal{B}$ - the basis for the cyclic subspace $W_1$
\end{itemize}
\State Compute the cyclic Subspace  
\Statex $Z(\chi_1; \Phi) = \langle \chi_1,  \Phi \chi_1,\dots, \Phi^{l_1-1} \chi_1 \rangle$
\State Set of basis functions $\mathcal{B} = \{\chi_1,\Phi \chi_1,\dots,\Phi^{l_1-1}\chi_1 \}$
\If{$\chi_2,\chi_3,\dots,\chi_n \in \mbox{Span}\{\mathcal{B}\}$} 
\State $W_1 \gets \mbox{Span}\{\mathcal{B}\}$
\State \textbf{halt}
\Else
\State Find the smallest $i$ such that $\chi_i \notin \mbox{span}\{\mathcal{B}\}$
\State Compute the smallest $l_i$ such that 
\Statex $\Phi^{l_i} \chi_i \in \mbox{Span}\{ \mathcal{B} \cup \langle \chi_i,\Phi \chi_i,\dots, \Phi^{l_i-1}\chi_i \rangle\} $
\State $V_i = \{\chi_i,\Phi \chi_i,\dots, \Phi^{l_i-1} \chi_i \}$
\State Append the set $V_i$ to $\mathcal{B}$
\State \textbf{go to} 5
\EndIf
\EndProcedure
\end{algorithmic}
\end{algorithm}

\subsection{Numerical Example}
Consider a biochemical network represented by the following Boolean equations (\cite{Goodwin})
\begin{equation}
\begin{aligned}
A(k+3) &= A(k)B(k+1) + 1 \\
B(k+3) &= A(k+1)B(k) + 1
\end{aligned}
\end{equation}
This dynamics can be represented in terms of a 6-state dynamical system over $\ff_2$ as follows
\begin{equation}
\begin{bmatrix} x_1(k+1) \\ x_2(k+1) \\ x_3(k+1) \\ x_4(k+1) \\ x_5(k+1) \\ x_6(k+1) \end{bmatrix} = \begin{bmatrix} x_2(k) \\ x_3(k) \\ x_1(k)x_5(k) + 1 \\ x_5(k) \\ x_6(k) \\ x_2(k)x_4(k)+1  \end{bmatrix} 
\end{equation}
where $x_1(k) = A(k)$ and $x_4(k) = B(k)$ respectively and the state map $F(x)$ is defined on the right. RO-KLS is constructed and the solutions of the FSS are analyzed through the solutions of the RO-KLS. It can be seen that the dimension of the cyclic invariant subspace is much smaller than $2^6$. 

Constructing the Koopman subspace as in Algorithm \ref{alg:KLSCyc}, the cyclic evolution of the co-ordinate function $\chi_1$ is
\begin{eqnarray}
\label{eq:Cycsub1}
\begin{aligned}
&\chi_1 \to \chi_2 \to \chi_3 \to \chi_1\chi_5+1 \to \chi_2\chi_6+1 \to \chi_3(\chi_2\chi_4+1) + 1 \to \chi_1 \chi_5(\chi_3+1) + \chi_3\chi_5 \\
&\to \chi_1 \chi_5 \chi_6 (\chi_2+1) + \chi_6 \to \chi_3\chi_5 + \chi_2\chi_6 + \chi_2 \chi_6(\chi_3 + \chi_3\chi_4+\chi_4) + 1 \to \chi_3 \chi_5 + 1 \\
&\to \chi_6(\chi_1\chi_5+1) + 1 \to \chi_2 \chi_4(\chi_6+1) + \chi_2 \chi_6 \to \chi_3+\chi_2 \chi_3 \chi_4(\chi_5+1) \\
&\to 1 + \chi_1\chi_5 + \chi_3\chi_5(1+\chi_1+\chi_6+\chi_1\chi_6) \to \chi_2 \chi_6 + 1
\end{aligned}
\end{eqnarray}
where the $\to$ represents the operation
\[
\psi(x) \to \psi \circ F(x)
\]
The last function in the sequence $\chi_2\chi_6+1$ is a linear combination of the previous functions (as it had already appeared in the sequence before) and we have a cyclic subspace of $\chi_1$. Also we see that $\chi_2$ and $\chi_3$ are already in this subspace. We construct the cyclic subspace of $\chi_4$
\begin{eqnarray}
\label{eq:Cycsub2}
\chi_4 \to \chi_5 \to \chi_6 \to \chi_2\chi_4 + 1 \to \chi_3\chi_5+1
\end{eqnarray} 
where $\chi_3\chi_5+1$ is already in the cyclic subspace of $\chi_1$. These two sequences span all the basis functions and $W_1$ is the span of functions in  (\ref{eq:Cycsub1}) and (\ref{eq:Cycsub2}). The matrix representation of $K_1$ is omitted due to space constraints. Note that RO-KLS is a linear system of dimension 18, while the full order KLS is of dimension 64 (since $\mbox{dim}(\Vo) = 2^6$). To analyze the orbits of the RO-KLS, the minimal polynomial of RO-KLS is computed
\[
p(\xi) = \xi^4\ (\xi+1)^2\ (\xi^4+\xi^3+\xi^2+\xi+1)^2
\]
The RO-KLS can be decomposed into non-singular and nilpotent part where the non-singular part correspond to the periodic orbits and the nilpotent part correspond to the chains. The non-singular part has a minimal polynomial $(\xi+1)^2(\xi^4+\xi^3+\xi^2+\xi+1)^2$ which corresponds to possible orbit lengths of $1,2,5\ \mbox{and}\ 10$ while the degree of nilpotence is $4$ which corresponds to the length of the longest chain. (Details regarding computation of solutions of linear FSS is developed in (\cite{Gill1})). 
\begin{enumerate}
    \item The original system has one orbit each of length 2,5 and 10. The RO-KLS also has predicted orbit lengths of 2,5 and 10.
    \item The longest chain in the original system is of length 4 which is also in accordance with the results from RO-KLS. 
\end{enumerate}
\section{Observability and Observer theory for FSS using RO-KLS}
Consider an FSS of equation (\ref{eq:FSS}) which is reproduced here for convenience. 
\begin{equation*}
\begin{aligned}
    x(k+1) &= F(x(k)) \\
    z(k) &= g(x(k))
\end{aligned}
\end{equation*}
where $F : \fqn \to \fqn $ is the state transition map and $g : \fqn \to \ff_q^m$ is the output map. Similar to the Koopman Linear System developed for the system developed in the previous section, one can associate a Koopman linear system for the system for (\ref{eq:FSS}), where $\Phi$ is the Koopman operator corresponding to the FSS. Consider a sequence of outputs $z(0),z(1),\dots,z(L)$ of (\ref{eq:FSS}) corresponding to an initial condition $x(0)$ in $\fqn$. We recall as defined in introduction, this system (\ref{eq:FSS}) is said to be \textit{observable} if given a sequence of its outputs $z(0),z(1),\dots,z(L)$ for some $L$, there exists a unique initial condition $x(0)$ which generates the sequence of output. We shall refer to this problem of computing the initial condition given an output sequence as the observability problem.

The problem of reconstruction of initial condition from the sequence of outputs for a general non-linear system involves solving the polynomial system of equations
\begin{equation}
\label{eq:opseq}
z(k)=g(F^k(x(0))) = \Phi^k (g)(x(0))
\end{equation}
for $x(0)$. This is a well known hard computational problem for nonlinear FSS. In the case of FSS over $\ff_2$, this is the problem of solving all satisfying assignments of the Boolean system for $x(0)$. This problem is known to be of class $NP$. Hence unique solvability of (\ref{eq:opseq}) is the necessary and sufficient condition for observability of the system (\ref{eq:FSS}).

We construct an observability condition for (\ref{eq:FSS}) in terms of the matrix $K_1$ of the RO-KLS and a matrix representation of $g$ in terms of the basis of $W_1$. 


\subsection{RO-KLS for FSS with outputs}
The concept of RO-KLS is extended for systems with outputs (\ref{eq:FSS}) in following way. Recall the cyclic invariant subspace $W_1$ which is spanned by the co-ordinate functions $\chi_i$ is computed as in algorithm \ref{alg:KLSCyc}. Let the output map be defined as 
\[
g(x) = \begin{bmatrix} g_1(x) \\ g_2(x) \\ \vdots \\ g_m(x)\end{bmatrix}
\]
where each $g_i(x)$ is a $\ff_q$-valued function. The space $W_1$ is now expanded as sum of cyclic invariant subspaces of $\Phi$ generated by $\chi_i$ as well as $g_i$. Let this resulting space be denoted $W(g)$ which is the smallest $\Phi$-invariant subspace of $V^0$ which contains $\{\chi_i\}\cup \{g_j\}$ for $i=1,\dots,n$ and $j=1,\dots,m$. Let $\mcb$ be a basis for $W(g)$.
\[
\mcb = \{\psi_1(x),\psi_2(x),\dots,\psi_N(x) \}
\]
Since each of $g_i(x) \in W(g)$, there exists a unique representation of $g_i(x)$ in terms of the basis $\mcb$.
\[
g_i(x) = \sum_{j=1}^N \gamma_{ij} \psi_j(x)
\]
The output map $g(x)$ then can be represented as 
\begin{equation}
\begin{aligned}
g(x) = \Gamma \hat{\psi}(x)
\end{aligned}
\label{outputmap}
\end{equation}
where the entries of $\Gamma$ are defined as $\Gamma(i,j) = \gamma_{ij}$ and $\hat{\psi} = [\psi_1(x),\psi_2(x),\dots,\psi_N(x)]^T$. Consider the following dynamical system 
\begin{equation}
\label{eq:ROKLS-op}
    \begin{aligned}
    y(k+1) &= K_1 y(k) \\
    w(k) &= C y(k) \\
    \yop(k) &= \Gamma y(k)
    \end{aligned}
\end{equation}
where $y(k) \in \fqN$, $w(k) \in \fqn$, $\yop(k) \in \ff_q^m$, $K_1$ is the restriction of $\Phi$ on the space $W(g)$ and $C$ is the map defined as in (\ref{statemap}). This system is defined as the RO-KLS for an FSS with output with states $y(k)$, coordinate evaluation $w(k)$ and output evaluation $\yop(k)$. $K_1$ and $C$ can be viewed as the state transition map, coordinate evaluation map as in (\ref{ROKLS}) and $\Gamma$ is the output map.
\newline
\begin{lemma}
\label{lem:OutputEquivalence}
Consider an FSS as in equation (\ref{eq:FSS}) with initial condition $x(0)$ and a RO-KLS defined as in equation (\ref{eq:ROKLS-op}) with initial condition $y(0) = \hat{\psi}(x(0))$. If the FSS has an output sequence $z(0),z(1),\dots$ then the corresponding output sequence of RO-KLS is $\yop(0) = z(0), \yop(1) = z(1), \dots$
\end{lemma}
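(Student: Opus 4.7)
The plan is to leverage Theorem \ref{thm:ROKLS} directly, but applied to the enlarged cyclic invariant subspace $W(g)$ rather than the original $W_1$. First I would observe that the construction of $W(g)$ is formally identical to that of $W_1$ in the previous section: it is the smallest $\Phi$-invariant subspace containing a finite set of observables, now $\{\chi_i\}\cup\{g_j\}$ instead of just $\{\chi_i\}$. The matrix $K_1$ is still the matrix representation of $\Phi$ restricted to this cyclic invariant space in the chosen basis $\mcb$, and the coordinate projection matrix $C$ is still uniquely determined by $\hat{\chi}=C\hat{\psi}$. Therefore the statement and proof of Theorem \ref{thm:ROKLS} go through verbatim in this extended setting, giving the key identity
\[
y(k) = \hat{\psi}(x(k)) \quad \text{for all } k \geq 0,
\]
whenever $y(0)=\hat{\psi}(x(0))$.

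Next I would use the output representation (\ref{outputmap}), namely $g(x) = \Gamma\hat{\psi}(x)$, which is well-defined precisely because each $g_j$ lies in $W(g)$ and therefore has a unique expansion in the basis $\mcb$. Evaluating at $x(k)$ gives $g(x(k)) = \Gamma\hat{\psi}(x(k))$. Combining this with the identity above yields
\[
\yop(k) \;=\; \Gamma y(k) \;=\; \Gamma \hat{\psi}(x(k)) \;=\; g(x(k)) \;=\; z(k),
\]
which is exactly what we need.

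There is essentially no hard step here; the lemma is a consequence of setting up $W(g)$ carefully so that it is simultaneously large enough to contain the output functions $g_j$ (allowing a linear representation of $g$ in the basis $\mcb$) and $\Phi$-invariant (so that Theorem \ref{thm:ROKLS} applies). The only subtlety I would want to flag in the write-up is verifying that the extension from $W_1$ to $W(g)$ does not invalidate Theorem \ref{thm:ROKLS}. Concretely, I would note that Algorithm \ref{alg:KLSCyc} can be run with the initial set of generators $\{\chi_1,\dots,\chi_n,g_1,\dots,g_m\}$, producing a basis $\mcb$ of $W(g)$ on which $\Phi$ restricts to a well-defined linear map with matrix $K_1$, and the coordinate functions $\chi_i$ continue to admit a unique linear expansion in $\mcb$ because $W_1\subseteq W(g)$. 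With that observation in place, the proof reduces to the two-line chain of equalities above.
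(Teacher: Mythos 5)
Your proposal is correct and follows essentially the same route as the paper: invoke Theorem \ref{thm:ROKLS} to get $y(k)=\hat{\psi}(x(k))$, then apply the representation $g(x)=\Gamma\hat{\psi}(x)$ to conclude $\yop(k)=\Gamma\hat{\psi}(x(k))=g(x(k))=z(k)$. Your extra remark checking that Theorem \ref{thm:ROKLS} still applies on the enlarged space $W(g)$ is a point the paper leaves implicit, but it does not change the argument.
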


The lemma assures that the output sequences of both the FSS and RO-KLS are the same when the initial condition of the RO-KLS is $y(0) = \hat{\psi}(x(0))$. 

\begin{proof}
  From theorem \ref{thm:ROKLS} it is proved there exists a unique trajectory $y(k) = \phat(x(k))$ of the RO-KLS for each trajectory $x(k)$ of the KLS. The output of the RO-KLS is 
\[
\yop(k) = \Gamma y(k) = \Gamma \phat(x(k))
\]
From equation (\ref{outputmap}), 
\[
\Gamma \phat(x(k)) = g(x(k)) = z(k)
\]
Combining the above equations $\yop(x) = z(k)$
\end{proof}

\begin{theorem}
Consider a FSS with output as in equation (\ref{eq:FSS}) and the RO-KLS as in equation (\ref{eq:ROKLS-op}). Then the system (\ref{eq:FSS}) is observable if the linear system $(K_1,\Gamma)$ is observable. 
\end{theorem}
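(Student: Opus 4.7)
The plan is to prove observability of the FSS by establishing injectivity of the map $x(0) \mapsto \{z(k)\}_{k\ge 0}$, using the assumed linear observability of the pair $(K_1,\Gamma)$ on $\fqN$ and the embedding $x\mapsto \hat\psi(x)$ that sends FSS trajectories into RO-KLS trajectories.

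First, I would recall what linear observability of $(K_1,\Gamma)$ provides: the map
\[
y(0)\ \longmapsto\ \{\Gamma K_1^{k}y(0)\}_{k\ge 0}
\]
is injective on $\fqN$, and in fact $y(0)$ can be uniquely recovered from a finite initial segment of length at most $N$ (the observability index). This is the standard Kalman observability statement for the linear system on $\fqN$; here I rely on the hypothesis of the theorem rather than rederive it.

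Next, take any two FSS initial conditions $x(0),x'(0)\in\fqn$ that generate identical output sequences $z(0),z(1),\dots,z(L)$ for $L$ at least $N-1$, and form $y(0)=\hat\psi(x(0))$ and $y'(0)=\hat\psi(x'(0))$ in $\fqN$. By Lemma \ref{lem:OutputEquivalence}, the RO-KLS output sequences from $y(0)$ and from $y'(0)$ agree with the FSS output sequence, and therefore agree with each other on the window $k=0,\dots,L$. Linear observability of $(K_1,\Gamma)$ then forces $y(0)=y'(0)$, i.e. $\hat\psi(x(0))=\hat\psi(x'(0))$.

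Finally, apply the coordinate-evaluation matrix $C$ and use the identity $x=\hat\chi(x)=C\hat\psi(x)$ of equation (\ref{eq:x_C_chi}) to conclude
\[
x(0)=C\hat\psi(x(0))=C\hat\psi(x'(0))=x'(0),
\]
which shows that distinct FSS initial conditions produce distinct output sequences, so the FSS is observable. I do not foresee a genuine obstacle: the nontrivial content has already been packaged into Lemma \ref{lem:OutputEquivalence} (output sequences are preserved by the embedding) and the coordinate identity (\ref{eq:x_C_chi}) (the embedding is left-invertible by $C$). The only point that merits a sentence of care is that the hypothesis of linear observability gives injectivity on all of $\fqN$, whereas only its restriction to the image $\hat\psi(\fqn)$ is strictly needed; this subset inclusion makes the implication one-directional as stated, and explains why the theorem is phrased as a sufficient condition.
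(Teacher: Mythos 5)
Your proposal is correct and follows essentially the same route as the paper: both reduce the problem to the linear observability of $(K_1,\Gamma)$ via the output-matching property of the embedding $y(0)=\hat\psi(x(0))$ (Lemma \ref{lem:OutputEquivalence}), and both recover $x(0)$ uniquely through the coordinate identity $x = C\hat\psi(x)$. The only difference is cosmetic --- you argue injectivity of the map from initial conditions to output sequences, while the paper solves the stacked observability-matrix equation $\mathcal{O}y(0)$ for $y(0)$ directly --- and your closing remark about why only sufficiency (not necessity) follows, since injectivity is only needed on the image $\hat\psi(\fqn)\subseteq\fqN$, is a worthwhile clarification that the paper leaves implicit.
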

\begin{proof}
  Given a sequence of outputs $z(k)$, let $\yop(k) = z(k)$. Computation of initial condition $x(0)$ of the FSS (\ref{eq:FSS}) is equivalent to computing $y(0)$ of the RO-KLS (\ref{eq:ROKLS-op}) by assigning $\yop(k) = z(k)$ and computing $y(0)$ of the RO-KLS. If $y(0)$ is uniquely determined, $x(0)$ is uniquely determined by (\ref{statemap}) by assigning $\hat{\psi} = y(0)$. Writing the output of the RO-KLS at each instant
  \begin{equation*}
  \begin{aligned}
      \yop(0) &= \Gamma y(0) \\
      \yop(1) &= \Gamma y(1) = \Gamma K_1 y(0) \\
      \yop(2) &= \Gamma y(2) = \Gamma K_1^2 y(0) \\
      &\ \ \vdots \\
      \yop(N-1) &= \Gamma y(N-1) = \Gamma K_1^{N-1} y(0)
 \end{aligned}
  \end{equation*}
  which can be written as 
  \begin{equation}
  \label{eq:ObsIC}
     \begin{bmatrix} \yop(0) \\ \yop(1) \\ \vdots \\ \yop(N-1) \end{bmatrix} = \begin{bmatrix} \Gamma \\ \Gamma K_1 \\ \vdots \\ \Gamma K_1^{N-1} \end{bmatrix} y(0) =: \mathcal{O} y(0)  
   \end{equation}
An unique solution for $y(0)$ exists if $\mathcal{O}$ is of full rank which in linear systems theory parlance is equivalent to saying that the pair $(K_1,\Gamma)$ is observable.
\end{proof}

\begin{remark}
Perhaps the most important consequence of this theorem is that the observability of the non-linear system with output (\ref{eq:FSS}) is translated to a condition of observability of the RO-KLS. Whenever the linear system (\ref{eq:ROKLS-op}) is observable and the dimension of $W(g)$ is small enough, the nonlinear observability can be computed by a feasible linear algebra computation.  
\end{remark}

It is thus logical to explore the next step, whether and how we can build an observer for (\ref{eq:FSS}) in terms of the linear system (\ref{eq:ROKLS-op}).

\subsection{State Observer for Non-Linear Finite State Systems}
 In linear system theory the Luenberger observer can compute the internal state of the system $x(k)$ from output measurements $z(k)$. We now show that such an observer is built for the linear system (\ref{eq:ROKLS-op}) when it is observable can compute the initial condition $x(0)$ of the non-linear system (\ref{eq:FSS}). Let $\hy(k)$ be the states of the observer. The observer dynamics is given by
\begin{equation}
\label{eq:ROKLSobs}
    \begin{aligned}
    \hy(k+1) &= K_1 \hy(k) + L (\yop(k) - \hy_{op}(k)) \\
    \hy_{op}(k) &= \Gamma \hy(k)
    \end{aligned}
\end{equation}
Let $e(k) = y(k)-\hy(k)$. The dynamics of $e(k)$ is given as 
\begin{equation*}
\begin{aligned}
    e(k+1) &= y(k+1) - \hy(k) \\
            &= K_1 y(k) - \bigg(K_1 \hy(k) + L(\yop(k) - \hy_{op}(k))\bigg) \\
            &= K_1 y(k) - \bigg(K_1 \hy(k) + L (\Gamma y(k) - \Gamma \hy(k)) \bigg) \\
            &= (K_1 - L \Gamma) (y(k) - \hy(k)) \\ 
            &= (K_1 - L \Gamma) e(k)
\end{aligned}
\end{equation*}
Note that $e(k)$ is the error between the state value and the observer state. Once the internal state $\hy(k)$ of the observer converges to the state $y(k)$ of the RO-KLS, the internal states of the FSS is computed through (\ref{statemap}) as 
\[
x(k) = C \hy(k)
\]
The idea is to make this error $e(k)$ go to a $zero$ matrix by making $(K_1 - L \Gamma)$ to be nilpotent by choosing an appropriate $L$. Once such a $L$ is chosen, the error dynamics reaches $zero$ in a maximum of $r$ time instants, where $r$ is the index of nilpotence of the matrix $(K_1 - L \Gamma)$. Such an $L$ can always be chosen if the pair $(K_1,\Gamma)$ is observable. 

Unlike the observer theory for systems over reals where the observer state converges to the internal state asymptotically, the state $\hy(k)$ converges to $y(k)$ in finite time.

\subsubsection{Observable and Unobservable modes}
Consider a LFSS as below 
\begin{equation*}
    \begin{aligned}
    x(k+1) &= A x(k) \\
    y(k) &= C x(k)
    \end{aligned}
\end{equation*}
where $x(k)$ and $y(k)$ are states and outputs respectively.
It is well known from the linear theory of dynamical systems that there exists a similarity transformation $P$ \cite{Kailath,Wonham} which decomposes the system matrix into a canonical from as 
\begin{equation}
\begin{aligned}
\overline{A} &= \begin{bmatrix} \overline{A}_{11} & 0 \\ \overline{A}_{21} & \overline{A}_{22} \end{bmatrix} \\
\overline{C} &= \begin{bmatrix} \overline{C}_{1} & 0 \end{bmatrix} 
\end{aligned}
\label{eq:transLFSSop}
\end{equation}
where the pair $(\overline{A}_{11},\overline{C}_1)$ is observable. The eigenvalues of $\overline{A}_{11}$ and $\overline{A}_{22}$ are called as observable and unobservable modes of the system. In the theory for linear dynamical systems over finite fields, the system dynamics is analyzed in terms of the elementary divisors of the state transition matrix. So, the notion of observable and unobservable modes are redefined in terms of elementary divisors of the matrix. 
\newline
\begin{definition}
Given a transformed linear finite state system with system matrices as in (\ref{eq:transLFSSop}), the elementary divisors of $\overline{A}_{11}$ are the \emph{observable elementary divisors} and the elementary divisors of $\overline{A}_{22}$ are the \emph{unobservable elementary divisors}. 
\end{definition}
The RO-KLS can be decomposed in a similar way into
\begin{align}
    \overline{K}_1 = \begin{bmatrix} \overline{K}_{11} & 0 \\ \overline{K}_{21} & \overline{K}_{22} \end{bmatrix} \ \ \ \overline{\Gamma} = \begin{bmatrix} \overline{\Gamma}_{1} & 0 \end{bmatrix}
    \label{eq:TransROKLS}
\end{align}
where $\overline{K}_{11}$ is the observable part and $\overline{A}_{22}$ is the unobservable part corresponding to the observable and unobservable elementary divisors respectively.
\newline
\begin{definition}
An LFSS with transformed system matrices $\bar{A}$ and $\bar{C}$ as in equation (\ref{eq:transLFSSop}) is said to be \emph{detectable} if $\bar{A}_{22}$ is nilpotent
\end{definition}
The following theorem characterizes the condition under which a dynamic observer can be built for a FSS using the RO-KLS framework. 
\newline
\begin{theorem}
Given a FSS as in equation (\ref{eq:FSS}) and its corresponding RO-KLS as in equation (\ref{eq:ROKLS-op}), there exists a $L$ such that the states of dynamic observer defined in (\ref{eq:ROKLSobs}) converges to $\hy(k) = \phat(x(k))$ if the pair $(K_1,\Gamma)$ is detectable.
\end{theorem}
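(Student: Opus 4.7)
The plan is to reduce the problem to a standard pole-placement statement for linear systems over $\ff_q$, via the observable/unobservable decomposition (\ref{eq:TransROKLS}). First I would apply the similarity transformation $P$ mentioned before Definition~3.2 to bring the RO-KLS matrices into the form
\begin{equation*}
\overline{K}_1 = P K_1 P^{-1} = \begin{bmatrix} \overline{K}_{11} & 0 \\ \overline{K}_{21} & \overline{K}_{22} \end{bmatrix}, \qquad \overline{\Gamma} = \Gamma P^{-1} = \begin{bmatrix} \overline{\Gamma}_{1} & 0 \end{bmatrix},
\end{equation*}
with $(\overline{K}_{11},\overline{\Gamma}_1)$ observable. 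By the detectability hypothesis, $\overline{K}_{22}$ is nilpotent of some index $r_2$.

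Next I would search for a gain in the partitioned form $\overline{L}=\begin{bmatrix} L_1 \\ 0 \end{bmatrix}$, so that
\begin{equation*}
\overline{K}_1 - \overline{L}\,\overline{\Gamma} = \begin{bmatrix} \overline{K}_{11} - L_1 \overline{\Gamma}_1 & 0 \\ \overline{K}_{21} & \overline{K}_{22} \end{bmatrix}.
\end{equation*}
Since the pole placement theorem is valid over any field (it depends only on the controllable/observable-canonical form construction, which is field-independent, see \cite{Kailath,Wonham}), from observability of $(\overline{K}_{11},\overline{\Gamma}_1)$ there exists $L_1$ for which the characteristic polynomial of $\overline{K}_{11}-L_1\overline{\Gamma}_1$ is $\xi^{n_1}$, where $n_1=\dim \overline{K}_{11}$. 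In particular $\overline{K}_{11}-L_1\overline{\Gamma}_1$ is nilpotent of index at most $n_1$. Because the block on the diagonal below is also nilpotent, the block lower-triangular matrix $\overline{K}_1-\overline{L}\,\overline{\Gamma}$ is nilpotent of index at most $r=n_1+r_2$ (a short calculation with powers of a block lower-triangular matrix with nilpotent diagonal blocks). Setting $L=P^{-1}\overline{L}$ yields $K_1-L\Gamma$ nilpotent of the same index.

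Finally, the error dynamics derived immediately before the theorem gives $e(k+1)=(K_1-L\Gamma)e(k)$, hence $e(k)=(K_1-L\Gamma)^k e(0)$, which vanishes for all $k\geq r$ regardless of the initial observer state $\hy(0)$. Thus $\hy(k)=y(k)$ for $k\geq r$, and by Theorem~\ref{thm:ROKLS} the trajectory $y(k)$ is exactly $\phat(x(k))$, so $\hy(k)=\phat(x(k))$ as claimed; the state of the FSS is then recovered as $x(k)=C\hy(k)$ using (\ref{statemap}). The main obstacle I anticipate is justifying the pole-placement step cleanly over a finite field: one needs to note that the assignment of invariant factors under output injection, which specializes to making $\overline{K}_{11}-L_1\overline{\Gamma}_1$ nilpotent, is a purely algebraic construction in the observable canonical form and carries over verbatim from $\mathbb{R}$ to $\ff_q$, so no analytic notion of ``eigenvalue placement in the unit disk'' is invoked.
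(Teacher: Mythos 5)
Your proposal is correct and follows essentially the same route as the paper: transform to the observable/unobservable decomposition, use observability of $(\overline{K}_{11},\overline{\Gamma}_1)$ to place the characteristic polynomial of $\overline{K}_{11}-\overline{L}_1\overline{\Gamma}_1$ at $\xi^{n_1}$, and invoke nilpotence of $\overline{K}_{22}$ (detectability) to conclude the block-triangular error matrix is nilpotent, so $e(k)$ vanishes in finite time. Your version is slightly more explicit than the paper's (bounding the nilpotence index, fixing $\overline{L}_2=0$, and justifying pole placement over $\ff_q$), but these are elaborations of the same argument rather than a different approach.
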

\begin{proof}
  Assuming the RO-KLS to be in the canonical form as in equation (\ref{eq:TransROKLS}) and $\overline{L} = \begin{bmatrix} \overline{L}_1 \\ \overline{L}_2 \end{bmatrix}$, the error $e(k)$ has the following dynamics 
  \begin{equation}
      e(k+1) = \begin{bmatrix} \overline{K}_{11} - \overline{L}_1 \overline{\Gamma}_1 & 0 \\ \overline{K}_{21} - \overline{L}_2 \overline{\Gamma}_1 & \overline{K}_{22}  \end{bmatrix} e(k)
  \end{equation}
  The matrix $(\overline{K}_{11} - \overline{L}_1 \overline{\Gamma}_1)$ can be made a nilpotent matrix by choosing $\overline{L}_1$. Such a $\overline{L}_1$ always exists because the pair $(\overline{K}_{11},\overline{\Gamma}_1)$ is observable. So the error dynamics settles down to $zero$ if $\overline{K}_{22}$ is a nilpotent matrix.
\end{proof}
\subsection{Observer Construction for FSS}
This section gives an algorithm to construct an observer for the FSS. Whenever $K_{22}$ as in decomposition (\ref{eq:TransROKLS}) correspond to a nilpotent matrix, the states of the observer $\hy(k)$ settles down to $y(\phat(x(k))$
\begin{algorithm}
\begin{algorithmic}[1]
\caption{Construction of Dynamic Observer through RO-KLS}
\label{alg:ROKLSobs}
\Procedure{Dynamic Observer}{}
\State \textbf{Outputs}: 
\begin{itemize}
    \item[] $\hy(k)$ - the computed internal state of the RO-KLS 
    \item[] $\hat{x}(k)$ - the computed internal state of the non-linear FSS
\end{itemize}
\State Compute $W_1$, the cyclic invariant subspace spanning the basis functions $\chi_i(x)$ and the output functions $g_i(x)$.
\State Compute the matrices $K_1$, $C$ and $\Gamma$ using a basis $\mcb$ of the space $W_1$
\State Construct the RO-KLS as in equation (\ref{eq:ROKLS-op}).
\State Compute the transformation $\bar{y}(k) = P^{-1} \hy(k)$ which decomposes $K_1$ and $\Gamma$ as in equation (\ref{eq:TransROKLS}). 
\State Find $\overline{L} = \begin{bmatrix} \overline{L}_1 \\ 0 \end{bmatrix}$ such the matrix $(\overline{K}_{11} - \overline{L}_1\overline{\Gamma}_1)$ is a nilpotent matrix. 
\State Construct the dynamic observer as in equation (\ref{eq:ROKLSobs}) with $L = P^{-1} \overline{L}$
\State The state $\hy(k)$ is the internal state of the RO-KLS
\State Compute $\hat{x}(k) = C \hy(k)$. 
\EndProcedure
\end{algorithmic}
\end{algorithm}

\subsection{Numerical Example}
In this section, a numerical example is presented where the initial condition of the FSS is estimated from the sequence of outputs using the observability matrix of RO-KLS. Also, a dynamic observer is constructed where the internal state of the observer converges to the internal state of RO-KLS.

Consider the FSS over $\ff_3^2$ as follows
\begin{align}
\begin{aligned}
    \begin{bmatrix} x_1(k+1) \\ x_2(k+1)\end{bmatrix} &= \begin{bmatrix} 2 & 1 \\ 1 & 1 \end{bmatrix} \begin{bmatrix} x_1(k) \\ x_2(k)\end{bmatrix} \\
    z(k) &= x_1(k)^2+x_2(k)
    \end{aligned}
    \label{eq:Obsex}
\end{align}
The state transition map is linear, but the output map ($\ff_3^2 \to \ff_3$) is non-linear. $W_1$ was computed to be of dimension $4$. Considering the following basis $\mcb$ for $W_1$
\[
\mcb = \{ \chi_1, \chi_2, \chi_1^2, \chi_1^2+\chi_2^2+\chi_1\chi_2 \}
\]
the matrix $K_1$ and $\Gamma$ are given as 
\[
K_1 = \begin{bmatrix} 2 & 1 & 0 & 0 \\ 1 & 1 & 0 & 0 \\ 0 & 0 & 0 & 1 \\ 0 & 0 & 1 & 0\end{bmatrix}  \hspace{0.3in} \Gamma = \begin{bmatrix} 0 & 1 & 1 & 0\end{bmatrix}
\]
the co-ordinate map is given as 
\[
C = \begin{bmatrix} 1 & 0 & 0 & 0 \\ 0& 1 & 0 & 0\end{bmatrix}
\]
and the observability matrix is given as 
\[
\mathcal{O} = \begin{bmatrix}
0 & 1 & 1 & 0 \\
 1 & 1 & 0 & 1 \\
 0 & 2 & 1 & 0 \\
 2 & 2 & 0 & 1
\end{bmatrix}
\]
This system is observable as the observability matrix $\mathcal{O}$ is full rank. 
\subsubsection{Computation of initial condition}
Consider the following output sequence for $z(k)$ 
\[
z(0) = 1, \hspace{0.1in} z(1) = 0, \hspace{0.1in} z(2) = 1, \hspace{0.1in} z(3) = 2
\]
The linear system of equations are written as in equation (\ref{eq:ObsIC})
\[
\begin{bmatrix} 1 \\ 0 \\ 1 \\ 2 \end{bmatrix} =  \begin{bmatrix}
0 & 1 & 1 & 0 \\
 1 & 1 & 0 & 1 \\
 0 & 2 & 1 & 0 \\
 2 & 2 & 0 & 1
\end{bmatrix} y(0)
\]
which gives the initial condition of RO-KLS to be 
\[
y(0) = \begin{bmatrix} 2 & 0 & 1 & 1\end{bmatrix}^T
\]
and the initial condition $x(0)$ of the FSS is given by 
\[
x(0) = Cy(0) = \begin{bmatrix} 2 \\ 0\end{bmatrix}
\]
\subsubsection{Dynamic Observer}
The following $L$ matrix makes $K_1 - L\Gamma $ nilpotent.
\[
L = \begin{bmatrix} 1 & 0 & 0 & 2\end{bmatrix}^T
\]
The observer dynamics is given by 
\begin{align*}
\hy(k+1) &= (K_1 - L \Gamma) \hy(k) + L z(k) \\
&= \begin{bmatrix} 2 & 0 & 2 & 0 \\
1 & 1 & 0 & 0 \\
0 & 0 & 0 & 1 \\
0 & 1 & 2 & 0 \end{bmatrix} \hy(k) + \begin{bmatrix} 1 \\ 0 \\ 0 \\ 2\end{bmatrix} z(k)
\end{align*}
The minimal polynomial of $K_1 - L \Gamma$ is $x^4$ and so the observer state $\hy(k)$ converge to the $y(k)$, the state of RO-KLS from any arbitrary initial condition in a maximum of $4$ time instances. After the internal state of observer converges to that of RO-KLS, the internal state of the FSS can be computed by using co-ordinate map $C$ as 
\[
x_{obs}(k) = C\hy(k)
\]
Starting with an initial condition $[2,0]^T$ for the FSS, the following output sequence is obtained 
\[
1,0,1,2,1,0,1,2,1,0,\dots
\]
Initializing the observer to $[0,0,0,0]^T$, the following table compares the internal state of the FSS $x(k)$ with the predicted state by observer $x_{obs}(k)$.
\begin{center}
\begin{tabular}{|c|c|c|c|}
\hline
 $k$ & $z(k)$ & $x(k)$ & $x_{obs}(k)$ \\ [0.5ex] 
 \hline
 0 & 1 & $[2,0]^T$ & $[0,0]^T$ \\
 \hline
 1 & 0 & $[1,2]^T$ & $[1,0]^T$ \\
 \hline
 2 & 1 & $[1,0]^T$ & $[2,1]^T$ \\
 \hline
 3 & 2 & $[2,1]^T$ & $[0,0]^T$ \\
 \hline
 4 & 1 & $[2,0]^T$ & $[2,0]^T$ \\
\hline
 5 & 0 & $[1,2]^T$ & $[1,2]^T$ \\
 \hline
 6 & 1 & $[1,0]^T$ & $[1,0]^T$ \\
 \hline
 7 & 2 & $[2,1]^T$ & $[2,1]^T$ \\
 \hline
 8 & 1 & $[2,0]^T$ & $[2,0]^T$ \\
 \hline
 9 & 0 & $[1,2]^T$ & $[1,2]^T$ \\
 \hline
\end{tabular}
\end{center}
As expected, $x_{obs}(k)$ converges to the internal state $x(k)$ of the FSS at $k = 4$.
\section{Conclusion}
The Koopman linear system (KLS) is the linear system defined by the dual map of the state update of an FSS on the space of functions. Although of an exponential size, KLS can be used to infer the structure of solutions of the FSS. A reduced order KLS (RO-KLS) of a possibly smaller dimension is constructed by generating the cyclic invariant subspace containing the coordinate functions. When the RO-KLS is of significantly small dimension, the computation of the structure of solutions of non-linear FSS can be significantly simplified and solved by using tools from linear algebra. Without the use of KLS or the reduced system, these problems belong to hard problem classes of computation. As an extension of the theory, RO-KLS is constructed for systems with outputs. It it shown that the non-linear FSS is observable iff the RO-KLS is an observable linear system. A Luenberger type observer is also constructed using the RO-KLS and it is shown how it can recover the internal state of the non-linear FSS. Without an observer, computation of the internal state from outputs of a non-linear FSS involves hard computational problems. Such an approach to compute the internal states as well as the initial condition should be immensely useful in the field of cryptography and systems biology. Such applications can be pursued as future directions of this work.


%
%



\bibliographystyle{abbrv}
\bibliography{main.bib}

\end{document}